\newcommand{\Ex}[2][ ]{\ensuremath{\mathbb{E}_{#1}\left[#2\right]}}
\renewcommand{\Pr}[2][ ]{\ensuremath{\mathbb{P}_{#1}\left(#2\right)}}
\newcommand{\eqdef}{\overset{\Delta}{=}}
\newcommand{\Norm}[1]{\ensuremath{\mathcal{N}\left(#1\right)}}
\newcommand{\CNorm}[1]{\mathcal{CN}\left(#1\right)}
\newcommand{\R}{\mathbb{R}}
\newcommand{\C}{\mathbb{C}}
\newcommand{\Z}{\mathbb{Z}}
\DeclarePairedDelimiterX{\norm}[1]{\lVert}{\rVert}{#1}
\DeclarePairedDelimiterX{\normi}[1]{\lVert}{\rVert_{\infty}}{#1}
\newcommand{\simiid}{\overset{\textrm{i.i.d.}}{\sim}}
\DeclareMathOperator{\ReOp}{Re}
\DeclareMathOperator{\erf}{erf}
\renewcommand{\Re}[1]{\ReOp\left[#1\right]}
\DeclareMathOperator{\ImOp}{Im}
\renewcommand{\Im}[1]{\ImOp\left[#1\right]}
\newcommand{\ejo}{e^{j\Omega}}
\newcommand{\ej}[1]{e^{j {#1}}}
\newtheorem{example}{Example}
\newcommand{\Hk}[1]{\mathcal{H}\!\left(#1\right)}
\crefname{hypothesis}{Hypothesis}{Hypotheses}
\title{Frequency-Domain Gaussian Process Models for $H_\infty$ Uncertainties%
    \thanks{Manuscript for revision.%
\funding{Air Force Office of Scientific
Research grant FA9550-21-1-0288.}
}}
\author{Alex Devonport 
    \thanks{University of California, Berkeley, USA\\ (\email{\{alex\_devonport@berkeley.edu}, \email{arcak@berkeley.edu})}
    \and Peter Seiler\thanks{University of Michigan, Ann Arbor, USA (\email{pseiler@umich.edu})}
    \and Murat Arcak\footnotemark[2]
}
\newcommand*{\addFileDependency}[1]{%
  \typeout{(#1)}%
  \@addtofilelist{#1}%
  \IfFileExists{#1}{}{\typeout{No file #1.}}%
}
\begin{document}

\maketitle

\begin{abstract}
Complex-valued Gaussian processes are commonly used in Bayesian frequency-domain
system identification as prior models for regression. If each realization
of such a process were an $H_\infty$ function with probability one, then the
same model could be used for probabilistic robust control, allowing for
robustly safe learning.
We investigate sufficient conditions for a general complex-domain Gaussian
process to have this property. 
For the special case of processes whose
Hermitian covariance is stationary, we provide an explicit parameterization
of the covariance structure in terms of a summable sequence of nonnegative
numbers.
We then establish how an $H_\infty$ Gaussian process can serve as a prior
for Bayesian system identification and as a probabilistic uncertainty model for
probabilistic robust control. In particular, we compute formulas for refining
the uncertainty model by conditioning on frequency-domain data and for
upper-bounding the probability that the realizations of the process satisfy a
given integral quadratic constraint.
\end{abstract}

\begin{keywords}
  Gaussian processes, system identification, robust control
\end{keywords}

\begin{AMS}
  93E10,
\end{AMS}

Probabilistic models of input-output dynamical systems, where the input-output
relationship itself contains probabilistic elements separate from input noise or
measurement error, have important applications
both in system identification and probabilistic robust control.
In system identification, probabilistic systems act as models of prior belief in
Bayesian estimates of the system dynamics. In robust control, probabilistic
models form the core of probabilistic robustness analysis: the objective is then
to verify, or design a controller such that, the ensemble of uncertainties for
which the system is stabilized has high probability.
These two lines of research-- Bayesian system identification and probabilistic
robust control-- have generally been developed separately, and use different
types of probabilistic models.

A number of models for Bayesian system identification have been developed
in the time domain, with works
like~\cite{pillonetto2010new}
and~\cite{chen2012estimation}
using Gaussian processes to identify the impulse
response of a linear time-invariant (LTI) stable system. 
Subsequent works consider frequency-domain
regression,
such as~\cite{lataire2016transfer}
which uses a modified complex Gaussian process regression
model to estimate transfer functions from discrete Fourier transform (DFT) data,
and~\cite{stoddard2019gaussian}
which considers a
similar regression approach to estimate the generalized frequency response of
nonlinear systems.
The models developed in these works are generally nonparametric (so as not to
place \emph{a priori} restrictions on the system order). However, with the
exception of the time-domain stable kernels of~\cite{pillonetto2022regularized},
their stability properties are generally developed only for the predictive mean.
 
Models for probabilistic robust control have been developed for probabilistic
extension of $\mu$
analysis~(\cite{khatri1998guaranteed,balas2012analysis,biannic2021advanced}),
disk margins~(\cite{somers2022probabilistic}), scenario
optimization~\cite{calafiore2006scenario}, and the methods reviewed
in~\cite{calafiore2007probabilistic}. Unlike for Bayesian system identification,
these models usually place severe structural restrictions on the form of the
unknown model, including a finite (known) bound on the order. These restrictions
ensure that each realization of the uncertain system is guaranteed to represent
a physically interpretable system (e.g. an element of $H_\infty$), allowing for
a meaningful interpretation of probabilistic guarantees of robustness.

Since both Bayesian system identification and probabilistic robust control use
probabilistic uncertainty models, applying both techniques to the same model is
a promising strategy for learning an uncertain control system and asserting
probabilistic behavioral certificates.
The central concept is to use the Bayesian uncertainty of the learned model
to construct a probabilistic robustness guarantee for a suitably
chosen controller. However, this approach requires a probabilistic model that
enjoys the benefits of both classes of models described above, namely that the
model be nonparametric and that realizations are almost surely interpretable. 

This paper develops a class of probabilistic models, the $H_\infty$ Gaussian
processes, which are both nonparametric and almost surely interpretable. In
brief, an $H_\infty$ Gaussian process is a random complex function of a complex
variable, whose gains are complex-normally distributed, and whose sample paths
are $H_\infty$ functions-- that is, interpretable as causal and stable systems-- of
arbitrary (or even infinite) order.
An uncertain system modeled by an $H_\infty$ Gaussian process admits
refinement through data through Gaussian process regression: By conditioning on point
observations in the frequency domain, the model becomes more accurate, though
still uncertain in unobserved frequency ranges.
On the other hand, an $H_\infty$ Gaussian process model admits robustness
analysis by virtue of the fact that it represents an ensemble of
systems, precisely as in robust control, with the additional structure of a
weight (the probability measure) over members of the ensemble. 

This paper provides three main contributions.
The first contribution is to provide a
mathematical foundation for the class of $H_\infty$ Gaussian processes (GP).
Section~\ref{sec:preliminaries} introduces the system setup, reviews
background information on complex-valued random variables and stochastic
processes, and introduces the classes of $H_\infty$ GPs.
Section~\ref{sec:hinf-processes} then investigates how to construct $H_\infty$
Gaussian process. The main result here is a set of sufficient conditions under
which a random complex function of a complex variable will have the $H_\infty$
property.
In addition to the general conditions, we provide a complete characterization
(Theorem~\ref{prop:stationary-process}) of
the covariance structure of a special class of $H_\infty$ Gaussian process,
namely those whose Hermitian covariance is stationary. Each Hermitian stationary
$H_\infty$ process is parameterized by a summable sequence of
nonnegative reals, which lead to computationally tractable closed forms for
certain choices of sequences. 

The second contribution is to establish how to refine an $H_\infty$ GP using data.
Section~\ref{sec:regression}
reviews widely linear and strictly linear complex estimators for complex
Gaussian process regression and presents numerical examples of Bayesian system
identification. 
Contrary to other recent work in Bayesian system
identification, we choose to use the \emph{strictly linear estimator} for our
Gaussian process models instead of the \emph{widely linear estimator}.
Although the widely linear estimate is superior for general processes, we
find that for $H_\infty$ Gaussian process models the strictly linear estimator
works nearly as well while being simpler and more stable to compute than the
widely linear estimator.

The final contribution is to establish probabilistic guarantees of robustness for system
models that use an $H_\infty$ GP as a feedback uncertainty. In
Section~\ref{sec:hinf-gp-robustness}, 
we shall see that establishing accuracy-only probabilistic
guarantees for a number of several robustness certificates-- particularly those based on small-gain arguments and
IQCs--
comes down to establishing an
inequality of the form $\Pr{\normi{f} < u} \ge 1-\delta$ for some $H_\infty$
Gaussian process $f$ with known mean and covariance functions. Bounds of this
form, in turn, can be established by computing the expected number of gain
upcrossings of $f$, which can be carried out by means of Belyaev
formulas~\cite{belyaev1968number}.

In prior work~\cite{devonport2023frequency}, we developed  
initial results for the mathematical foundations of, and Bayesian regression
with, $H_\infty$ GPs. The first two contributions of the present paper are
extensions of this work, while the third is entirely novel.

\subsection{Notation}

For a complex element $X$, $X^*$ denotes the complex conjugate, or Hermitian
transpose where appropriate.
We denote the exterior of the unit disk as
$E = \{Z\in\C : |z| > 1\}$, and its closure as
$\bar{E} = \{Z\in\C : |z| \ge 1\}$.
$L_2$ is the Hilbert space of functions $f:\C\to\C$ such that
$\int_{-\pi}^{\pi} |f( e^{j\Omega})|^2d\Omega < \infty$,
equipped with the
inner product $\langle f,g\rangle_2 = \int_{-\pi}^{\pi}
f(e^{j\Omega}){g^*}(e^{j\Omega})d\Omega$.
$H_2$ is the Hilbert space of functions $f:\C\to\C$ that are bounded
and analytic for all $z\in E$ and
$\int_{-\pi}^{\pi} |f(Re^{j\Omega})|^2d\Omega < \infty$ for $R\ge 1$,
equipped with the
inner product $\langle f,g\rangle_2 = \int_{-\pi}^{\pi}
f(e^{j\Omega}){g^*}(e^{j\Omega})d\Omega$. It is a vector subspace of
$L_2$.
$H_\infty$ is the Banach space of functions $f:\bar{E}\to\C$ that are bounded
and analytic for all $z\in E$ and
$\sup_{\Omega\in[-\pi,\pi]} |f(e^{j\Omega})| < \infty$, equipped with the
norm $\normi{f} = \sup_{\Omega\in[-\pi,\pi]} |f(e^{j\Omega})|$.
$\ell^1$ is the space of absolutely summable sequences,
that is sequences $\{a_n\}_{n=0}^\infty$
such that $\sum_{n=0}^\infty |a_n| <\infty$.
$x\sim \Norm{\mu,\Sigma}$ denotes a Gaussian distribution with mean $\mu=\Ex{x}$ and
covariance $\Sigma=\Ex{(x-\mu)^2}$; likewise, $w\sim \CNorm{\mu,\Sigma,\tilde{\Sigma}}$ denotes a
complex Gaussian distribution with mean $\mu=\Ex{w}$, Hermitian covariance $\Sigma = \Ex{|w-\mu|^2}$,
and complementary covariance $\tilde{\Sigma}=\Ex{(w-\mu)^2}$.

\section{Preliminaries}
\label{sec:preliminaries}

$H_\infty$ Gaussian processes are nonparametric statistical models for
causal, LTI, BIBO stable systems in the frequency domain. Since our main focus
will be the probabilistic aspects of the model, we restrict our attention to the
simplest dynamical case: a single-input single-output system in discrete time.
Thus, our dynamical systems are frequency-domain multiplier operators
$H_f:L_2\to L_2$ whose output is defined pointwise as $(H_f
u)(\omega)=f(\omega)u(\omega)$,
where $f:\C\to\C$ is the system's transfer function. Thanks to the bijection
$H_f \leftrightarrow f$, we generally mean the function $f$ when we refer
to ``the system''.

Since our aim is to construct a probabilistic model for the system that is not
restricted to a finite number of parameters, we must work directly with random
complex functions of a complex variable: this is a special type of complex
stochastic process that we call a z-domain process.

\begin{definition}
    \label{def:z_domain_process}
    Let $(\Xi, F, \mathbb{P})$ denote a probability space.
    A z-domain stochastic process with domain $D\subseteq \C$ is a
    measurable function $f:\Xi\times D\to\C$.
\end{definition}

Note that each value of $\xi\in\Xi$ yields a function
$f_\xi=f(\xi,\cdot):D\to\C$, which is called either a ``realization'' or a
``sample path'' of $f$.
If we take $\xi$ to be selected at random according to the probability law
$\mathbb{P}$, then $f_\xi$ represents a ``random function'' in the frequentist
sense.
Alternatively, if we have a prior belief
about the likelihood of some $f_\xi$ over others,
we may encode this belief in a Bayesian sense using the measure $\mathbb{P}$.
We drop the dependence of $f$ on $\xi$ from the notation outside of definitions,
as it will be clear when $f(z)$ refers to the random variable $f(\cdot,z)$ or
when $f$ stands for a realization $f_\xi$.

\begin{definition}
    \label{def:gaussian_process}
    A z-domain Gaussian process is a z-domain process $f$ such that, for any
    $n$, the random vector
    $(f(z_1),\dotsc,f(z_n))$ is complex multivariate Gaussian-distributed
    for all $(z_1,\dotsc,z_n)\in D^n$.
\end{definition}
Analogous to the way that a real Gaussian process is determined by its mean and
covariance, a z-domain Gaussian process $f$ is completely specified by its mean
$m:D\to\C$,
Hermitian covariance $k:D\times D\to\C$,
and complementary covariance $\tilde{k}:D\times D\to \C$,
defined as
\begin{equation}
    \begin{aligned}
        m(z) = \Ex{f(z)},\quad
        k(z,w) &= \Ex{(f(z)-m(z))(f(w)-m(w))^*},\\
        \tilde{k}(z,w) &= \Ex{(f(z)-m(z))(f(w)-m(w))}.
    \end{aligned}
\end{equation}
In the case that $\tilde{k}(z,w)=0$ for all $z,w\in D$, the real and imaginary
parts of $f$ are independent and identical processes; in this case $f$ is called
a \emph{proper} z-domain Gaussian process.

\subsection{$H_\infty$ Gaussian Processes}%
\label{sub:z_domain_and_h_infty_gaussian_processes}

Consider a deterministic input-output operator $H_g$ with transfer function
function $g:D\to\C$. The condition that $H_g$ belong to the operator space
$H^\infty$ of LTI, causal, and BIBO stable systems is that $g$ belong to the
function space $H_\infty$. Now suppose we wish to construct
a random operator $H_f$ using the realizations of a z-domain process $f$ as its
transfer function: the analogous condition is that the realizations of $f$
lie in $H_\infty$ with probability one.

\begin{definition}
    \label{def:hinf_process}
    A z-domain process is called an $H_\infty$ process when the set
    $\{\xi\in\Xi : f_\xi \in H_\infty\}$ has measure one under $\mathbb{P}$.
\end{definition}

Less formally, an $H_\infty$ process is a z-domain process $f$ such that
$\mathbb{P}(f\in H_\infty)=1$. Having $f_\xi\in H_\infty$ implies that
$\bar{E}\subseteq D$: we usually take $D=\bar{E}$.
If we also require that $H_g$ give real
outputs to real inputs in the time domain, $g$ must satisfy the
conjugate symmetry relation $g(z^*)=g^*(z)$ for all $z\in D$. The analogous
condition for $H_f$ is to require that $f$ satisfy the condition with
probability one.

\begin{definition}
    \label{def:conjugate_symmetric}
    A z-domain process $f$ is called conjugate symmetric when the set
    $\{\xi\in\Xi : f_\xi(z^*)=f_\xi^*(z),\ \forall z\in D\}$ has measure one under $\mathbb{P}$.
\end{definition}

Combining definitions~\ref{def:gaussian_process},~\ref{def:hinf_process},
and~\ref{def:conjugate_symmetric}, we
arrive at our main object of study: conjugate-symmetric $H_\infty$ Gaussian
processes.

\begin{example}[``Cozine'' process]
    \label{ex:cozine}
    Consider the random transfer function
    \begin{equation}
        \label{eq:cozine_form}
        f(z) = \frac{X - a(X\cos(\omega_0) - Y\sin(\omega_0))z^{-1}}
                    {1-2a\cos(\omega_0)z^{-1} + a^2z^{-2}},
    \end{equation}
    where $X,Y\simiid\Norm{0,1}$, $a\in(0,1)$, $\omega_0\in[0,\pi]$. Then $f$
    is a z-domain Gaussian process. From the form of the transfer function, we
    see that $f$ is bounded on the unit circle, analytic on $E$, and conjugate
    symmetric with probability one, from which it follows that $f$ is
    a conjugate symmetric $H_\infty$ process.
    Since $f$ corresponds to the z-transform of an exponentially decaying
    discrete cosine with random magnitude and phase, we call it a
    \emph{``cozine'' process}.
    The process has mean zero, and
    its Hermitian and complementary covariances are
    \begin{equation}
        \begin{aligned}
        \label{eq:cozine_covariances}
            k(z,w)
            &=
            \frac{1-a\cos(\omega_0)(z^{-1}+(w^*)^{-1}) + a^2(zw^*)^{-1}}
            {(1-2a\cos(\omega_0)z^{-1} + a^2z^{-2})(1-2a\cos(\omega_0)(w^*)^{-1} + a^2(w^*)^{-2})},
            \\
            \tilde{k}(z,w)
            &=
            \frac{1-a\cos(\omega_0)(z^{-1}+w^{-1}) + a^2(zw)^{-1}}
            {(1-2a\cos(\omega_0)z^{-1} + a^2z^{-2})(1-2a\cos(\omega_0)w^{-1} + a^2w^{-2})}.
        \end{aligned}
    \end{equation}
\end{example}
As a Bayesian prior for an $H^\infty$ system, this process represents a belief
that the transfer function exhibits a
resonance peak (of unknown magnitude) at $\omega_0$. Knowing $\omega_0$ in
advance is a strong belief, but it can be relaxed by taking a hierarchical model
where $\omega_0$ enters as a hyperparameter. When used as a prior, the
hierarchical model
represents the less determinate belief that there is a resonance peak
\emph{somewhere}, whose magnitude can be made arbitrarily small if no peak is
evident in the data.

The construction in Example~\ref{ex:cozine}, where properties of conjugate
symmetry and BIBO stability can be checked directly, may be extended to random
transfer functions of any finite order. However, the technique does not carry to
the infinite-order $H_\infty$ processes required for nonparametric Bayesian
system identification, or more generally for applications that do not place an \emph{a
priori} restriction on the order of the system.
We are therefore motivated to find conditions under which a z-domain process is
a conjugate-symmetric $H_\infty$ Gaussian process expressed directly in
terms of covariance properties.

\section{Constructing $H_\infty$ Gaussian Processes}
\label{sec:hinf-processes}

This section investigates two tools
to verify that a z-domain Gaussian process has the $H_\infty$
property with conjugate symmetry. The first tool is a setof
sufficient conditions on the mean and covariances that, when satisfied by the
process, ensure the $H_\infty$ property and conjugate symmetry. These
conditions are developed in Section~\ref{sub:general_sufficient_conditions} and
summarized in Theorem~\ref{thm:hinf-gp-conditions}.
The second tool is a special class of symmetric z-domain processes, provided in
Theorem~\ref{prop:stationary-process} guaranteed to have the $H_\infty$
property. 
This class of processes, which enjoys the property of \emph{Hermitian
Stationarity}, is
parameterized by nonnegative real $\ell_1$ sequences. One
may establish that a given Hermitian stationary process is $H_\infty$ by finding the
corresponding sequence; conversely, one may select an arbitrary $\ell_1$
sequence and receive a Hermitian stationary $H_\infty$ process.
This class of processes is developed in
Section~\ref{sub:hermitian_stationary_processes}, and the $\ell_1$
characterization is given in Theorem~\ref{prop:stationary-process}.

Throughout this section, we consider a z-domain Gaussian process $f$ with zero mean,
Hermitian covariance function $k$, and complementary covariance function
$\tilde{k}$. Taking zero mean implies no loss in generality: to lift any of
these conditions to a process with nonzero mean, we simply ask that the desired
property (inhabiting $H_\infty$, possessing conjugate symmetry, or both) also
hold for the mean.

\subsection{General Sufficient Conditions}%
\label{sub:general_sufficient_conditions}

For a general z-domain Gaussian process, we can establish almost sure
boundedness and analyticity (and hence the $H_\infty$ property) with corresponding
regularity conditions on the covariance functions. The essential condition is
that $k(z,w)$ possess derivatives with respect to $z$ and $w^*$. The
following lemma, whose proof is deferred to Section~\ref{sec:proofs_of_main_results}, demonstrates the case
of proper z-domain GPs on compact domains.

\begin{lemma}
    \label{lem:analytic-process-compact}
    Suppose $k:D\times D\to\C$ is positive definite and bounded for a simply
    connected domain $D\subseteq\C$ that is closed under
    conjugation,\footnote{Here, ``positive definite'' is meant in the
    kernel-theoretic sense, that is that the kernel Gramian matrix
$(K)_{ij}=k(z_i,z_j)$ is positive definite for any set $z_1,\dotsc,z_n\in D$.}
that is $z\in D\Rightarrow z^*\in D$. 
    Furthermore, suppose that $k$ is holomorphic in its first argument and
    \emph{antiholomorphic} in its second argument; that is, $k(z,w)$ possesses
    complex derivatives of all orders with respect to $z$ and $w^*$.
    Let 
    $k_{11}=\frac{\partial^2}{\partial z \partial w^*} k$
    and
    suppose that there exists $\alpha > 0$ such that
    \begin{equation}
        k_{11}(z,z) + k_{11}(w,w) - 2k_{11}(z,w) \le |z-w|^{2+\alpha}.
    \end{equation}
    Then there exists a proper z-domain Gaussian process with Hermitian
    covariance $k$ whose realizations are analytic with probability one.
\end{lemma}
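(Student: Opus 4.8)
The plan is to realize the process by the Kolmogorov extension theorem and then upgrade its sample paths, in two stages: first from merely measurable to continuous (indeed $C^1$) using the increment bound on $k_{11}$, and then from continuous to holomorphic using the holomorphic structure of $k$ via Morera's theorem. For the construction, since $k$ is positive definite there is a proper $z$-domain Gaussian process $f$ on $D$ with Hermitian covariance $k$ and vanishing complementary covariance: one applies Kolmogorov's theorem to the pair of real processes formed by the real and imaginary parts of $f$, whose joint real matrix-valued covariance is dictated by $k$ together with the properness requirement $\tilde{k}\equiv 0$, and is positive semidefinite precisely because $k$ is positive definite. It then remains to exhibit an analytic modification of this $f$.

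Second, I would record that the hypotheses make $f$ \emph{mean-square holomorphic}. Because $k$ is holomorphic in its first argument and antiholomorphic in its second, the mixed derivative $k_{11}=\partial^2 k/\partial z\,\partial w^*$ exists and is continuous, and standard mean-square calculus then gives that $f$ has a mean-square complex derivative $f'$ at every $z\in D$; this $f'$ is again a proper Gaussian $z$-domain process, now with Hermitian covariance $k_{11}$, and the fundamental theorem of mean-square calculus yields $f(z)-f(w)=\int_{[w,z]}f'(\zeta)\,d\zeta$ as a mean-square integral along the segment $[w,z]$, valid whenever $z,w$ lie in a convex neighbourhood inside $D$.

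Third comes the Kolmogorov--Chentsov step, which is where the displayed inequality is used. It majorizes the increment variance of the Gaussian process $f'$: $\Ex{|f'(z)-f'(w)|^2}=k_{11}(z,z)+k_{11}(w,w)-2\,\Re{k_{11}(z,w)}\le|z-w|^{2+\alpha}$, the diagonal terms being real and nonnegative. Since $f'$ is Gaussian, all even moments follow, $\Ex{|f'(z)-f'(w)|^{2p}}\le C_p|z-w|^{p(2+\alpha)}$, and taking $p$ with $p(2+\alpha)>2$ and exhausting $D$ by a countable increasing family of compact sets, the Kolmogorov--Chentsov theorem on $\R^2\cong\C$ produces a modification of $f'$ with locally H\"older-continuous sample paths. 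Feeding this back through $f(z)-f(w)=\int_{[w,z]}f'$, and using that $k_{11}$ is bounded on compacts, gives $\Ex{|f(z)-f(w)|^2}\le C_K|z-w|^2$ on each compact $K\subset D$, so a second application of Kolmogorov--Chentsov furnishes a modification of $f$ — still written $f$ — with locally H\"older-continuous sample paths on $D$.

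Finally I would promote continuity to holomorphy. Fix a triangle $\Delta$ with $\overline{\Delta}\subset D$; using boundedness of $k$ and compactness of $\partial\Delta$ to justify a Fubini interchange, $\Ex{|\oint_{\partial\Delta}f(z)\,dz|^2}=\oint_{\partial\Delta}\oint_{\partial\Delta}k(z,w)\,dz\,\overline{dw}$, and for each fixed $w$ the inner integral $\oint_{\partial\Delta}k(z,w)\,dz$ vanishes by the Cauchy--Goursat theorem since $k(\cdot,w)$ is holomorphic on $D$; hence $\oint_{\partial\Delta}f(z)\,dz=0$ almost surely. Running this over a countable family of triangles with vertices in a countable dense subset of $D$ and invoking the sample-path continuity just established lets one pass, off a single null set, to \emph{every} triangle in $D$, whereupon Morera's theorem shows $f$ is holomorphic on $D$ with probability one. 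The genuinely substantive point — and the reason the hypothesis on $k_{11}$ is not cosmetic — is exactly this passage: without a continuous modification, Morera is unavailable and only a measurable version of $f$ can be asserted. The remaining work, namely the measurable-modification bookkeeping, the mean-square-calculus facts, and the Fubini interchange $\Ex{\oint\,\cdot}=\oint\Ex{\cdot}$, is routine but everywhere relies on the holomorphic/antiholomorphic structure of $k$.
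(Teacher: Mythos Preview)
Your proposal is correct and reaches the conclusion by a genuinely different final step than the paper. Both arguments build the proper process, identify the mean-square derivative $f'$ with Hermitian covariance $k_{11}$, and use the increment hypothesis on $k_{11}$ together with Gaussianity to invoke Kolmogorov--Chentsov. The divergence is in how analyticity is obtained. The paper takes the continuous version of $f'$ and \emph{integrates} it pathwise, defining $g_\xi(z)=f(z_0)+\int_\gamma f'_\xi(\zeta)\,d\zeta$; this $g$ is holomorphic by construction (fundamental theorem of calculus), and the bulk of the paper's proof is a second-moment computation showing $\Ex{|f(z)-g(z)|^2}=0$, i.e.\ that $g$ is a version of $f$. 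You instead produce a continuous modification of $f$ directly and then invoke Morera via $\Ex{|\oint_{\partial\Delta}f\,dz|^2}=\oint\oint k(z,w)\,dz\,\overline{dw}=0$ (Cauchy--Goursat in the first variable), passing from a countable dense family of triangles to all triangles by sample-path continuity. Your route is shorter and sidesteps the version calculation entirely; the paper's route is more constructive and delivers the analytic modification by an explicit formula. One small redundancy in your write-up: you establish a continuous modification of $f'$ but never use it---your increment bound $\Ex{|f(z)-f(w)|^2}\le C_K|z-w|^2$ only needs the mean-square existence of $f'$ and local boundedness of $k_{11}$, so that step can be dropped.
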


We can adapt Lemma~\ref{lem:analytic-process-compact} to the case of
conjugate-symmetric processes on the exterior of the unit disk using the
following constructions. First, we can symmetrize an arbitrary proper process to
yield a conjugate-symmetric process. If the Hermitian covariance of the proper
process satisfies a symmetry condition, then it is preserved under the
symmetrization.

\begin{lemma}
    \label{lem:symmetric-from-proper}
    Suppose $f$ is a mean-zero proper process whose Hermitian covariance $k$
    satisfies
    $
        k(z^*,w^*)^* = k(z,w)
    $
    on a domain $D$ closed under conjugation.
    Then the process $g$ defined pointwise as
    $g(z)=\frac{1}{\sqrt{2}}(f(z) + f(z^*)^*)$
    is mean-zero, conjugate-symmetric z-domain Gaussian process on the domain
    $D$ with Hermitian covariance $k$.
    Furthermore, if $f$ has almost surely analytic realizations, then so does $g$.
\end{lemma}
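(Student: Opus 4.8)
The plan is to verify, in order: that $g$ is a z-domain Gaussian process with mean zero; that its Hermitian covariance equals $k$; that it is conjugate symmetric; and that it inherits almost sure analyticity from $f$. For Gaussianity, fix $z_1,\dots,z_n\in D$; since $D$ is closed under conjugation, $z_1^*,\dots,z_n^*\in D$ as well, so by Definition~\ref{def:gaussian_process} the vector $(f(z_1),\dots,f(z_n),f(z_1^*),\dots,f(z_n^*))$ is complex Gaussian. Each $g(z_i)=\tfrac{1}{\sqrt2}\bigl(f(z_i)+f(z_i^*)^*\bigr)$ is an $\mathbb R$-linear (widely linear) function of this vector, and a widely linear image of a complex Gaussian vector is again complex Gaussian in the general sense used here (the one that permits a nonzero complementary covariance), because the stacked real and imaginary parts transform under a real-linear map, which preserves real joint Gaussianity. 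Hence $(g(z_1),\dots,g(z_n))$ is complex Gaussian and $g$ is a z-domain Gaussian process. Mean zero is immediate: $\Ex{g(z)}=\tfrac{1}{\sqrt2}\bigl(\Ex{f(z)}+\Ex{f(z^*)}^*\bigr)=0$.

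For the Hermitian covariance I would expand
\[
k_g(z,w)=\Ex{g(z)g(w)^*}=\tfrac12\Ex{\bigl(f(z)+f(z^*)^*\bigr)\bigl(f(w)^*+f(w^*)\bigr)}
\]
into four expectations. The two ``mixed'' terms, $\Ex{f(z)f(w^*)}=\tilde k(z,w^*)$ and $\Ex{f(z^*)^*f(w)^*}=\tilde k(z^*,w)^*$, both vanish because $f$ is proper. The surviving terms are $\Ex{f(z)f(w)^*}=k(z,w)$ and $\Ex{f(z^*)^*f(w^*)}=k(z^*,w^*)^*$, and the hypothesis $k(z^*,w^*)^*=k(z,w)$ makes the second equal to $k(z,w)$ as well, so $k_g(z,w)=\tfrac12\bigl(k(z,w)+k(z,w)\bigr)=k(z,w)$. (A parallel computation gives the complementary covariance of $g$ as $\tfrac12\bigl(k(z,w^*)+k(z^*,w)^*\bigr)$, which need not vanish: $g$ is conjugate symmetric but generally not proper. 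This is not needed.)

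Conjugate symmetry I would check pathwise, so that it holds surely rather than merely with probability one: fixing $\xi$ and $z\in D$ (hence $z^*\in D$), one has $g_\xi(z^*)=\tfrac{1}{\sqrt2}\bigl(f_\xi(z^*)+f_\xi(z)^*\bigr)$ using $(z^*)^*=z$, while $g_\xi(z)^*=\tfrac{1}{\sqrt2}\bigl(f_\xi(z)^*+f_\xi(z^*)\bigr)$, and the two coincide. For analyticity I would restrict to the probability-one event on which $f_\xi$ is analytic; there $z\mapsto f_\xi(z)$ is holomorphic, and $z\mapsto f_\xi(z^*)^*$ is holomorphic too, since a local expansion $f_\xi(w)=\sum_n a_n(w-w_0)^n$ yields $f_\xi(z^*)^*=\sum_n a_n^*(z-w_0^*)^n$ (a convergent power series with the same radius), using that the open set on which $f_\xi$ is analytic is invariant under conjugation. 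Then $g_\xi$ is a sum of two holomorphic functions, hence holomorphic, so $g$ has almost surely analytic realizations.

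There is no deep step here. The only points requiring care are the conjugate bookkeeping in the covariance expansion---properness annihilates exactly the two mixed terms, while the symmetry hypothesis $k(z^*,w^*)^*=k(z,w)$ is precisely what forces the two survivors to be equal rather than to average into the symmetrized kernel $\tfrac12\bigl(k(z,w)+k(z^*,w^*)^*\bigr)$---and the observation in the analyticity step that pre- and post-composition with conjugation compose to a holomorphic, not antiholomorphic, operation.
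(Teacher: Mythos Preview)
Your proof is correct and follows essentially the same approach as the paper: the covariance expansion, the pathwise verification of conjugate symmetry, and the observation that $z\mapsto f(z^*)^*$ is holomorphic are exactly what the paper does. You add welcome detail on Gaussianity (via the widely linear map) and on why the conjugate-of-conjugate composition remains holomorphic, both of which the paper asserts without elaboration.
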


\begin{proof}
    That $g$ is conjugate-symmetric is evident from its construction,
    since
    \begin{equation*}
        g(z)^* = \tfrac{1}{\sqrt{2}}(f(z) + f(z^*)^*)^*
    = \tfrac{1}{\sqrt{2}}(f(z)^* + f(z^*)) = g(z^*).
    \end{equation*}
    Furthermore,
    if $f$ is analytic then so is
    $f(z^*)^*$, and so thereby is $g$.
    Finally, we have
    \begin{equation*}
        \begin{aligned}
            \Ex{g(z)g(w)^*}
            &=
            \tfrac{1}{2}\Ex{\left(f(z) + f(z^*)^*\right)\left(f(w)+f(w^*)^*\right)^*}\\
            &= \tfrac{1}{2} \Ex{f(z)f(w)^* + f(z)f(w^*) + f(z^*)^*f(w)^* + f(z^*)^*f(w^*)}\\
            &= \tfrac{1}{2}\left( k(z,w) + \tilde{k}(z,w^*) + \tilde{k}(z^*,w)^* +
            k(z^*,w^*)^*\right)
            = k(z,w)
        \end{aligned}
    \end{equation*}
    where the last equality follows by our symmetry assumption; the
    complementary covariance terms vanish because $f$ is proper.
    This confirms that the Hermitian covariance of $g$ is $k$.
\end{proof}

Furthermore, we can transform an almost surely bounded and analytic process on the interior
of the unit disk into a bounded and analytic process on the exterior of the unit
disk-- that is, an $H_\infty$ process-- through the following elementary conformal mapping.

\begin{lemma}
    \label{lem:exterior-from-interior}
    Suppose $f(z)$ is analytic for $|z|\le 1$. Then $g(z)=f(z^{-1})$ is analytic
    for $|z|\ge 1$.
\end{lemma}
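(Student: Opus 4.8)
The plan is to verify directly that composition with the inversion map $z \mapsto z^{-1}$ carries holomorphic functions on the open unit disk to holomorphic functions on the exterior $E$. First I would observe that the map $\varphi(z) = z^{-1}$ is holomorphic on $\C \setminus \{0\}$, and in particular on the open exterior region $\{|z| > 1\}$, where it takes values in the punctured open disk $\{0 < |z^{-1}| < 1\} \subseteq \{|z| < 1\}$. Since $f$ is by hypothesis analytic on $\{|z| \le 1\}$ — in particular on the open disk $\{|z| < 1\}$ — the composition $g = f \circ \varphi$ is a composition of holomorphic functions on $\{|z| > 1\}$, hence holomorphic there by the chain rule. Explicitly, $g'(z) = -z^{-2} f'(z^{-1})$ for $|z| > 1$, which is well-defined since $z \neq 0$.

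The only subtlety is the boundary: the statement asks for analyticity on the closed exterior $\{|z| \ge 1\}$, matching the convention in the paper that ``analytic on $E$'' together with boundedness is what is needed for the $H_\infty$ property. Since $f$ is assumed analytic on the closed disk $\{|z| \le 1\}$, it extends holomorphically to some open neighborhood of that closed disk, i.e. to $\{|z| < 1 + \epsilon\}$ for some $\epsilon > 0$. Then $\varphi$ maps $\{|z| > 1/(1+\epsilon)\} \setminus \{0\}$, an open set containing the closed exterior $\{|z| \ge 1\}$, into $\{|z| < 1 + \epsilon\}$, so $g = f \circ \varphi$ is holomorphic on an open neighborhood of $\{|z| \ge 1\}$ — which is precisely what ``analytic for $|z| \ge 1$'' means. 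I would state this cleanly: because $f$ is analytic on a neighborhood of $\bar D(0,1)$ and $\varphi$ is a biholomorphism from a neighborhood of $\bar E$ onto a neighborhood of the punctured closed disk, $g$ is analytic on a neighborhood of $\bar E$.

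There is essentially no obstacle here — this is a one-line observation dressed up for the record, and the ``hard part,'' such as it is, is merely being careful about open versus closed domains so that the conclusion is phrased in the form subsequently used (namely, so that combining this lemma with an interior boundedness/analyticity result yields a genuine $H_\infty$ realization). I would not belabor the boundedness claim, since it is immediate: $\sup_{|z| \ge 1} |g(z)| = \sup_{|z| \ge 1} |f(z^{-1})| = \sup_{|w| \le 1} |f(w)|$, so $g$ inherits any uniform bound $f$ enjoys on the closed disk; but as stated the lemma only asserts the analyticity transfer, so a single sentence on the conformal map and the chain rule suffices for the proof body.
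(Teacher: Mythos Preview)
Your proposal is correct. The paper itself does not prove this lemma at all: it is stated without proof and described only as an ``elementary conformal mapping,'' so your argument via composition with the holomorphic inversion $\varphi(z)=z^{-1}$ and the chain rule, together with the care you take about extending to an open neighborhood of the closed region, is more than the paper provides and exactly the sort of verification the authors implicitly leave to the reader.
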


Assembling
lemmas~\ref{lem:analytic-process-compact},~\ref{lem:symmetric-from-proper},
and~\ref{lem:exterior-from-interior}, we arrive at the following sufficient
conditions for a z-domain Gaussian process to be an $H_\infty$ process.

\begin{theorem}
    \label{thm:hinf-gp-conditions}
    Suppose $k: D\times D\to\C$ is a positive definite kernel function that is
    bounded and analytic on a compact domain $D$ whose interior contains the
    unit circle and its interior, satisfying the assumptions of
    Lemmas~\ref{lem:analytic-process-compact}
    and~\ref{lem:symmetric-from-proper}. Then there exists a conjugate-symmetric
    z-domain Gaussian process whose realizations are bounded and analytic on the
    exterior of the unit disk, whose Hermitian covariance is given by $k(z^{-1},
    w^{-1})$.
\end{theorem}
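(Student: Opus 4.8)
The plan is to chain Lemmas~\ref{lem:analytic-process-compact},~\ref{lem:symmetric-from-proper}, and~\ref{lem:exterior-from-interior} together in the right order, tracking at each stage that the Gaussian-process structure, analyticity, boundedness, and conjugate symmetry are each either created or preserved, and finally recording how the Hermitian covariance transforms under the inversion.

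First I would invoke Lemma~\ref{lem:analytic-process-compact}. By hypothesis, $k$ and $D$ satisfy its assumptions, so the lemma produces a mean-zero proper z-domain Gaussian process $f$ on $D$ with Hermitian covariance $k$ whose realizations are analytic with probability one. Since $\{|z|\le 1\}$ lies in the interior of $D$ and $D$ is compact, an analytic realization of $f$ is analytic on a neighborhood of the closed unit disk and, being continuous on the compact set $D$, is bounded on $D$; thus $f$ is, with probability one, bounded and analytic on a neighborhood of $\{|z|\le 1\}$.

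Next I would apply Lemma~\ref{lem:symmetric-from-proper}. The extra hypothesis imported from that lemma, $k(z^*,w^*)^* = k(z,w)$, is assumed; since $f$ is mean-zero, proper, and has almost surely analytic realizations on the conjugation-closed domain $D$, the symmetrized process $g(z) = \tfrac{1}{\sqrt2}\bigl(f(z)+f(z^*)^*\bigr)$ is a mean-zero, conjugate-symmetric z-domain Gaussian process on $D$ with Hermitian covariance $k$, still analytic with probability one by the last clause of that lemma, and still bounded on $D$ because $D$ is closed under conjugation and $g$ is a fixed linear combination of $f(z)$ and $f(z^*)^*$. Finally I would push $g$ to the exterior of the unit disk via Lemma~\ref{lem:exterior-from-interior}: set $h(z) = g(z^{-1})$. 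The inversion $z\mapsto z^{-1}$ maps $\bar E$ into $\{|z|\le 1\}\subseteq\mathrm{int}(D)$, so $h$ is defined on $\bar E$; by Lemma~\ref{lem:exterior-from-interior} it is analytic on $E$ with probability one, and since $g$ is bounded on $D$ we get $\normi{h} = \sup_{\Omega\in[-\pi,\pi]}|h(e^{j\Omega})| < \infty$ almost surely, so $h\in H_\infty$ with probability one. Conjugate symmetry transports directly: $h(z^*) = g\bigl((z^{-1})^*\bigr) = g(z^{-1})^* = h(z)^*$ using conjugate symmetry of $g$. Gaussianity is preserved because, for any $z_1,\dots,z_n$, the vector $(h(z_1),\dots,h(z_n)) = (g(z_1^{-1}),\dots,g(z_n^{-1}))$ is a finite-dimensional marginal of the Gaussian process $g$, hence complex Gaussian; and the Hermitian covariance of $h$ is $\Ex{h(z)h(w)^*} = \Ex{g(z^{-1})g(w^{-1})^*} = k(z^{-1},w^{-1})$, as claimed.

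The only genuinely delicate point is the boundedness bookkeeping: Lemma~\ref{lem:analytic-process-compact} asserts only almost sure analyticity, so I would be careful to derive boundedness on the unit circle from compactness of $D$ together with continuity of analytic realizations, and to keep the three operations in this order -- symmetrize on $D$, where analyticity and compactness are in hand, and only then invert -- since applying Lemma~\ref{lem:symmetric-from-proper} after the inversion would require a version of it stated on $\bar E$ rather than on a domain containing the closed unit disk.
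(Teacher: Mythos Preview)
Your proposal is correct and follows essentially the same approach as the paper: chain Lemma~\ref{lem:analytic-process-compact}, then Lemma~\ref{lem:symmetric-from-proper}, then Lemma~\ref{lem:exterior-from-interior}, and read off the transformed covariance. You are in fact more careful than the paper's proof on a few points---explicitly deriving boundedness from compactness and continuity, verifying that conjugate symmetry and Gaussianity transport under $z\mapsto z^{-1}$, and noting why the symmetrization should precede the inversion---all of which are sound.
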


\begin{proof}
    By Lemmas~\ref{lem:analytic-process-compact}
    and~\ref{lem:symmetric-from-proper},
    we know that there exists a mean-zero, conjugate-symmetric process $f$ whose
    domain is $D$ and whose realizations are bounded and analytic in that
    domain. Let $g$ denote the z-domain process on the unit circle and
    its exterior whose realizations are defined pointwise as $g_\xi(z) = f_\xi(z^{-1})$.
    Since $f_\xi$ is bounded and analytic for $|z|\le 1$, it follows from
    Lemma~\ref{lem:exterior-from-interior} that $g_\xi$ is bounded and analytic
    for $|z|\ge 1$, that is for the unit circle and its exterior. Furthermore,
    conjugate symmetry is also preserved. Finally
    we have
    \begin{equation*}
        \Ex{g(z)g(w)^*} = \Ex{f(z^{-1})f(w^{-1})^*} = k(z^{-1},w^{-1})
    \end{equation*}
    which establishes the Hermitian covariance.
\end{proof}

\subsection{Hermitian Stationary Processes}%
\label{sub:hermitian_stationary_processes}

We now turn to a special class of $H_\infty$ Gaussian processes whose
covariances admit a direct characterization in terms of positive real $\ell_1$
sequences. Following Theorem~\ref{thm:hinf-gp-conditions}, we first restrict our
attention to Hermitian
covariance functions $k(z,w)$ holomorphic in $z$ and antiholomorphic in $w$ in a region
containing the exterior of the unit disk. Such functions admits the double
Laurent expansion
    $k(z,w) = \sum_{n,m=0}^\infty a_{nm} z^{-n}(w^*)^{-m}$;
we develop our special class by restricting off-diagonal $a_{nm}$ to be zero and
renaming $a_{nn}=a_n^2$,
yielding 
    $k(z,w) = \sum_{n=0}^\infty a_{n}^2 (z w^*)^{-n}$
where $a_n$ is real and positive. When $z$ and $w$ are restricted to the unit
circle, the resulting covariance,
    $k(\ej{\theta},\ej{\phi}) = \sum_{n,m=0}^\infty a_{n}^2 \ej{n(\theta-\phi)}$,
reduces to a function of the difference between the arguments of the inputs.
This condition is similar to the condition obeyed by real-valued stationary
processes, so we call a z-domain process with this property a \emph{Hermitian
stationary} process.
\begin{definition}
    A z-domain Gaussian process is \emph{Hermitian stationary} if its
    Hermitian covariance satisfies
    $k(e^{j\theta}, e^{j\phi})=k(e^{j(\theta-\phi)},1)$ for all $\theta$,$\phi\in[-\pi,\pi)$.
\end{definition}
Using a stationary process as a prior is common practice in
machine learning and control-theoretic applications of Gaussian process models.
Stationary processes are useful for constructing regression priors that do not
introduce unintended biases in their belief about the frequency response: since
$f(e^{j\theta})$ has the same
Hermitian variance across the entire unit circle, a sample path from a Hermitian
stationary $H_\infty$ process is just as likely to exhibit low-pass behavior as
it is high-pass or band-pass.%
We can obtain a ``partially informative'' prior by adding an $H_\infty$ process encoding
strong beliefs in one frequency range (such as the presence of a resonance peak)
to an $H_\infty$ process encoding weaker beliefs across all frequencies. The
sum, also an $H_\infty$ process, encodes a combination of these beliefs.

Under the additional condition of Hermitian stationarity, the
$H_\infty$ process is characterized by a sequence of nonnegative constants. This
is demonstrated in by the following result, whose proof is deferred to
Section~\ref{sec:proofs_of_main_results}.

\begin{theorem}
    \label{prop:stationary-process}
    Let $f$ be a Hermitian stationary, conjugate-symmetric z-domain Gaussian
    process with continuous Hermitian covariance $k$ and complementary
    covariance $\tilde{k}$.
    Then $f$ is an $H_\infty$ process if and only if $k$ and $\tilde{k}$ have
    the form
    \begin{equation}
        \label{eq:covariance_expansion}
        k(z,w) = \sum_{n=0}^\infty a_n^2 (zw^*)^{-n},
        \qquad
        \tilde{k}(z,w) = \sum_{n=0}^\infty a_n^2 (zw)^{-n},
    \end{equation}
    where $\{a_n\}_{n=0}^\infty$ is a nonnegative real $\ell^1$ sequence.
    Furthermore, $f$ may be expanded as \begin{equation}
        \label{eq:process_expansion}
        f(z) = \sum_{n=0}^\infty a_n w_n z^{-n},
    \end{equation}
    where $w_n\simiid\Norm{0,1}$.
\end{theorem}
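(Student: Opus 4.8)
The plan is to route both implications through the series representation~\eqref{eq:process_expansion} directly (Theorem~\ref{thm:hinf-gp-conditions} does not apply, since when $\{a_n\}$ is merely $\ell^1$ the covariance in~\eqref{eq:covariance_expansion} need not extend analytically past the unit circle): once~\eqref{eq:process_expansion} is in hand, the covariance formulas~\eqref{eq:covariance_expansion} are a one-line computation using $\Ex{w_nw_m}=\delta_{nm}$.

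\emph{Sufficiency and the expansion.} Given a nonnegative real $\ell^1$ sequence $\{a_n\}$, define $f$ pathwise by $f_\xi(z)=\sum_{n=0}^\infty a_nw_n(\xi)z^{-n}$ with $w_n\simiid\Norm{0,1}$, and write $S_N$ for its partial sums. First I would show the series converges uniformly on $\bar{E}$ almost surely: since $\Ex{\sum_n a_n|w_n|}=\sqrt{2/\pi}\sum_n a_n<\infty$, Tonelli gives $\sum_n a_n|w_n|<\infty$ a.s., and because $|a_nw_nz^{-n}|\le a_n|w_n|$ for $z\in\bar E$ this bound is uniform in $z$. Each $S_N$ is a polynomial in $z^{-1}$, hence continuous on $\bar E$ and analytic on $E$, so the uniform limit $f_\xi$ is continuous on $\bar E$, analytic on $E$, and bounded; thus $f_\xi\in H_\infty$. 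Conjugate symmetry is immediate from realness of the $a_n,w_n$, as $f_\xi(z^*)=\sum_n a_nw_n(z^{-n})^*=f_\xi(z)^*$, and $f$ is a z-domain Gaussian process because $(f(z_1),\dots,f(z_k))$ is an almost-sure---hence in-distribution---limit of jointly complex-Gaussian vectors $(S_N(z_1),\dots,S_N(z_k))$. Finally $|S_N(z)S_N(w)^*|\le(\sum_n a_n|w_n|)^2$, whose expectation is $(\sum_n a_n)^2<\infty$, so dominated convergence passes the limit in $\Ex{S_N(z)S_N(w)^*}=\sum_{n\le N}a_n^2z^{-n}(w^*)^{-n}$, and in the complementary version, to give~\eqref{eq:covariance_expansion}. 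Since the law of a z-domain Gaussian process is determined by $m$, $k$, $\tilde k$, every Hermitian-stationary conjugate-symmetric process with covariances~\eqref{eq:covariance_expansion} shares this law and, realized via~\eqref{eq:process_expansion}, is an $H_\infty$ process.

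\emph{Necessity.} Suppose $f$ is Hermitian stationary, conjugate symmetric, and $H_\infty$, with continuous covariances, and put $g_\xi(\theta):=f_\xi(e^{j\theta})$. I would argue in four steps. (i) \emph{Analyticity:} since $f_\xi\in H_\infty$ is analytic on $E$ and at $\infty$, it expands as $f_\xi(z)=\sum_{n\ge 0}b_n(\xi)z^{-n}$, so the Fourier coefficients of $g_\xi$ vanish at every positive frequency, $\hat g_\xi(m)=0$ for $m>0$, a.s. (ii) \emph{Herglotz:} Hermitian stationarity makes $R(\tau):=k(e^{j\tau},1)$ a continuous, $2\pi$-periodic, positive-definite function, so $R(\tau)=\sum_{n\in\Z}c_ne^{jn\tau}$ with $c_n\ge 0$ and $\sum_nc_n=R(0)<\infty$; the random Fourier coefficients $\xi_m:=\hat g(m)$ are then Hermitian-uncorrelated with $\Ex{|\xi_m|^2}=c_m$, so with (i) we get $c_m=0$ for $m>0$ and $g(\theta)=\sum_{n\ge 0}\xi_{-n}e^{-jn\theta}$. (iii) \emph{Conjugate symmetry:} $f_\xi(z^*)=f_\xi(z)^*$ gives $g_\xi(-\theta)=g_\xi(\theta)^*$, forcing each $\xi_{-n}$ to be real; being jointly Gaussian, uncorrelated and real, the $\xi_{-n}$ are independent, so with $a_n:=\sqrt{c_{-n}}$ and $\xi_{-n}=a_nw_n$ we obtain $w_n\simiid\Norm{0,1}$ and the expansion~\eqref{eq:process_expansion}, from which~\eqref{eq:covariance_expansion} follows by the computation above. (iv) \emph{Summability:} steps (i)--(iii) yield only $\sum_n a_n^2=\sum_nc_{-n}\le R(0)<\infty$, i.e. $\{a_n\}\in\ell^2$; to upgrade this to $\{a_n\}\in\ell^1$ one uses the almost-sure boundedness of the realizations, which---via a Borel--Cantelli / Kolmogorov three-series argument applied to the coefficient sequence $\{a_nw_n\}$ that controls the convergence of~\eqref{eq:process_expansion} on $\bar E$---rules out $\sum_n a_n=\infty$.

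The step I expect to be the main obstacle is (iv): showing that the sample-path boundedness inherent in $f\in H_\infty$ is genuinely incompatible with $\{a_n\}$ being only square-summable, so that the $H_\infty$ property forces \emph{absolute} summability of the covariance sequence. The remaining ingredients---the Herglotz representation, the one-sided spectral support coming from analyticity, the realness coming from conjugate symmetry, and the interchange-of-limits steps---are routine.
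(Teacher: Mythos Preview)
Your proposal follows essentially the same route as the paper. For sufficiency both arguments start from~\eqref{eq:process_expansion} and reduce to $\sum_n a_n|w_n|<\infty$ a.s.: you use Tonelli and a Weierstrass M-test, while the paper phrases the same computation as a submartingale with $\sup_T\Ex{M_T}=\sqrt{2/\pi}\sum_n a_n<\infty$ and invokes the martingale convergence theorem, then reads off BIBO stability (hence $H_\infty$) from absolute summability of the impulse response. For necessity the paper likewise expands $f$ in $\{z^{-n}\}$, applies Bochner's theorem to the stationary Hermitian covariance on the circle, matches the two expansions to kill the cross-terms $\Ex{h_nh_m}$ for $n\neq m$, and uses conjugate symmetry to make the $h_n$ real and hence independent---exactly your steps (i)--(iii).

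For the step you flag as the obstacle, the paper's resolution is short: it identifies $f\in H_\infty$ with BIBO stability, so the impulse response is absolutely summable a.s., i.e.\ $\sum_n a_n|w_n|<\infty$ a.s.; then the necessity direction of Kolmogorov's three-series theorem (condition~(ii)) forces $\Ex{\sum_n a_n|w_n|}=\sqrt{2/\pi}\sum_n a_n<\infty$. Your instinct that (iv) is where the real content lies is correct: the paper does not argue directly from sample-path boundedness on the circle to $\{a_n\}\in\ell^1$, but rather passes through the BIBO\,$\Leftrightarrow$\,$\ell^1$-impulse-response equivalence, which is a strictly stronger notion than membership in the function space $H_\infty$. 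So your three-series idea is exactly what the paper uses; the missing ingredient in your sketch is the intermediate claim that $H_\infty$ realizations have absolutely summable coefficients almost surely.
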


Theorem~\ref{prop:stationary-process} provides a useful tool for
constructing conjugate-symmetric $H_\infty$ Gaussian processes: all we need to
do is select a summable sequence of nonnegative numbers.

\begin{example}[Geometric $H_\infty$ process]
    \label{ex:geometric}
    Take $a_n^2 = \alpha^n$ with $\alpha\in(0,1)$; this yields
    a conjugate-symmetric $H_\infty$ Gaussian process with Hermitian covariance
    $k_\alpha(z,w)=\sum_{n=0}^\infty \alpha^n (zw^*)^{-n} = \frac{zw^*}{zw^*-\alpha}$
    and complementary covariance
    $\tilde{k}_\alpha(z,w) = \frac{zw}{zw-\alpha}$.
\end{example}

\section{Gaussian Process Regression in the Frequency Domain}
\label{sec:regression}

Let $H_\Delta\in H^\infty$ denote a system uncertainty whose transfer function $\Delta\in H_\infty$
we wish to identify. While not necessarily stochastic, $\Delta$ is unknown, and we represent both
our uncertainty and our prior beliefs in a Bayesian fashion with an $H_\infty$ Gaussian process
with
Hermitian and complementary covariances $k$ and $\tilde{k}$. To model our prior
beliefs, the distribution of $\Delta$ should give greater probability to functions we
believe are likely to correspond to the truth, and should assign probability zero to
functions ruled out by our prior beliefs. As an example of the latter, the fact that
$P(\Delta\in H_\infty)=1$ encodes our belief that $\Delta\in H_\infty$, which demonstrates
the importance of $H_\infty$ Gaussian processes for prior model design.

We suppose that our data consists of $n$ noisy frequency-domain point estimates
$y_i = \Delta(z_i) + e_i$, where $e_i\simiid\Norm{0,\sigma_n^2}$, $z_i\in\bar{E}$.
If our primary form of data is a time-domain trace of input and output values,
we first convert this data into an \emph{empirical transfer function estimate}
(ETFE). There are several well-established methods to construct ETFEs from time
traces~\cite{hayes1996statistical}, such as Blackman-Tukey spectral analysis,
windowed filter banks, or simply dividing the DFT of the output trace by the DFT
of the input trace. In our numerical examples, we will use windowed filter
banks.

Our approach is essentially the same procedure as standard Gaussian process
regression as described in~\cite{gpml} extended to the complex case.
We take the mean of the prior model to be zero without loss of generality.
To estimate the transfer function at a new point $z$, we note that $\Delta(z)$ is
related to $(y_1,\dotsc,y_n)$ under the prior model as
\begin{equation}
    \begin{bmatrix}
        \Delta(z)
        \\
        y
    \end{bmatrix}
    \sim
    \CNorm{0,
        \begin{bmatrix}
            K_{xx} & K_{xy} \\
            K_{xy}^* & K_{yy} \\
        \end{bmatrix},
        \begin{bmatrix}
            \tilde{K}_{xx} & \tilde{K}_{xy} \\
            \tilde{K}_{xy}^* & \tilde{K}_{yy} \\
        \end{bmatrix}
    };
\end{equation}
where $y\in\C^n$, $K_{yy}\in\C^{n\times n}$,
$K_{xy}\in\C^{n\times 1}$, and
$K_{xx}\in\C$ are defined componentwise as
\begin{equation}
    (y)_i=y_i,
    \quad
    \left(K_{yy}\right)_{ij} = k(z_i,z_j) + \sigma_n^2\delta_{ij},
    \quad
    \left(K_{xy}\right)_{ij} = k(z,z_i),
    \quad
    K_{xx} = k(z,z) + \sigma_n^2,
\end{equation}
and the components of the complementary covariance matrix are defined
analogously.

By conditioning $\Delta(z)$ on the data $y$ according to the prior
model, we obtain the posterior distribution of $\Delta(z)$. According to the
conditioning law for multivariate complex Gaussian random
variables~\cite[\S 2.3.2]{schreier2010statistical}, this is
$\Delta(z)|y\sim\CNorm{\mu, \sigma_p^2, \tilde{\sigma}_p^2}$, where
\begin{equation}
    \begin{aligned}
        \label{eq:widely_linear_prediction}
        \mu_q
        &=
        (K_{xy} - \tilde{K}_{xy}(K^*_{yy})^{-1}\tilde{K}_{yy}^*)P^{-1}y
        +
        (\tilde{K}_{xy}-K_{xy}K_{yy}^{-1}\tilde{K}_{yy})(P^*)^{-1}y^*\\
        \sigma_p^2
        &= k_{zz}
        - K_{xy}P^{-1}K_{xy}^*
        + \tilde{K}_{xy}K_{yy}^{-1}\tilde{K}_{yy}(P^*)^{-1}K_{xy}^*\\
        &\quad
        - \tilde{K}_{xy}(P^*)^{-1}\tilde{K}_{xy}^*
        + K_{xy}K_{yy}^{-1}\tilde{K}_{yy}(P^*)^{-1}\tilde{K}_{xy}^*\\
        \tilde{\sigma}_p^2
        &= \tilde{k}_{zz}
        - K_{xy}P^{-1}(\tilde{K}_{xy}^*)^*
        + \tilde{K}_{xy}K_{yy}^{-1}\tilde{K}_{yy}(P^*)^{-1}(\tilde{K}_{xy}^*)^*\\
        &\quad
        - \tilde{K}_{xy}(P^*)^{-1}(K_{xy}^*)^*
        + K_{xy}K_{yy}^{-1}\tilde{K}_{yy}(P^*)^{-1}(K_{xy}^*)^*
    \end{aligned}
\end{equation}
and where $P$ denotes the Schur complement
$P=K_{yy} - \tilde{K}_{yy}(K_{yy}^*)^{-1}\tilde{K}^*_{yy}$.
The predictive mean $\mu_p$ is the \emph{minimum mean-square error widely linear
estimator} of $\Delta(z)$ given $y$, where ``widely linear'' means
that $\mu_p$ is a linear combination of both $y$ and $y^*$.
A \emph{strictly linear} estimator, on the other hand, uses only $y$. Under the
same circumstances as above, the minimum least-square strictly linear estimator
for $\Delta(z)$ given $y$ and its error variance are respectively
\begin{equation}
    \label{eq:strictly_linear_prediction}
    \hat{\Delta}(z) = K_{xy}^* K_{yy}^{-1} y,
    \qquad
    \sigma^2_\Delta(z) = K_{zz} - K_{xy}^* K_{yy}^{-1}k_{xy},
\end{equation}
where the superscript $^*$ denotes Hermitian transpose.
This mean and variance are identical to the posterior mean and variance of a real Gaussian
process regression model (cf. Equation (2.19) in~\cite{gpml}) except that $K_{xx}$, $K_{xy}$, and
$K_{yy}$ are complex-valued.

The widely linear estimator can only be an improvement on the linear estimator,
since an estimate made using $y$ can certainly be made using
$(y,y^*)$. The improvement is measured by the Schur
complement $P$ defined above, which is the error covariance of statistically
estimating $y^*$ from $y$, or equivalently estimating the real part given the
imaginary part.
In particular, when $P=0$, the strictly linear and widely linear
estimators coincide, and the expressions
in~\eqref{eq:widely_linear_prediction} become ill-defined.
One case where this holds is when the covariances are \emph{maximally improper},
in which case the imaginary part can be estimated from the real with zero error.

In our experiments with real-impulse $H_\infty$ processes, we have found that
$P$ tends to be close to singular, and small in induced 2-norm and Frobenius norm
relative to $K_{yy}$ and $\tilde{K}_{yy}$. This makes the mean and variance
computations in~\eqref{eq:widely_linear_prediction} numerically unstable while
also implying that the strictly linear estimator will perform similarly to the
widely linear estimator.
We believe this is due to the symmetry
condition imposed on $k$ and $\tilde{k}$ by having real impulse response.
This condition implies that the imaginary part can be computed exactly from the
real part by the discrete Hilbert transform~\cite[\S 2.26]{rabiner1975theory}.
The covariance matrices $K_{yy}$ and $\tilde{K}_{yy}$ will not themselves be
maximally improper, since the Hilbert transform requires knowledge over the
entire unit circle; however, our experiments suggest that they are close to
maximally improper, and we conjecture that they become maximally improper in the
limit of infinite data.
This suggests that the strictly linear estimator will perform well for
conjugate-symmetric $H_\infty$ priors. For this reason, as well as the numerical
instability of the widely linear estimator when $P$ is close to singular, we use
the strictly linear estimator in our numerical experiments.

For $z\in D$ and $\eta > 0$, define the \emph{confidence ellipsoid}
$\mathcal{E}_\eta(z) = \{w\in \C: |w-\hat{\Delta}(z)|^2 \le \eta^2\sigma^{2}_\Delta(z)\}$. By
Markov's inequality, we know that $\Delta(z)\in\mathcal{E}_\eta(z)$ with probability
$\ge 1 - 1/\eta^2$.
This implies bounds on the real and imaginary parts
by projecting the confidence ellipsoid onto the real and imaginary axes:
from these we can construct probabilistic bounds on the magnitude
and phase of $\Delta(z)$ via interval arithmetic, which we will see in the numerical examples.

Let $\theta\in\Theta$ denote the hyperparameters of a covariance
function $k_\theta$, so that $K_{yy}$ becomes a function of $\theta$:
then the log marginal likelihood of the data under the posterior for the strictly
linear case is
$
    L(\theta) = -\tfrac{1}{2}\left(
        y^*K_{yy}(\theta)^{-1}y + \log\det K_{yy}(\theta) + n\log 2\pi
    \right).
$
Keeping the data $y$ and input locations $z_i$ fixed, $L(\theta)$ measures the
probability of observing data $y$ when the prior covariance function is
$k_\theta$. By maximizing $L$ with respect to $\theta$, we find the covariance
among $k_\theta$, $\theta\in\Theta$ that best explains the observations.%
\footnote{Although it seems contradictory to choose prior parameters based on
posterior data, it can be justified as an empirical-Bayes approximation to a
hierarchical model with $\theta$ as hyperparameter. }

\begin{figure}[htpb]
    \centering
    \includegraphics[width=\linewidth]{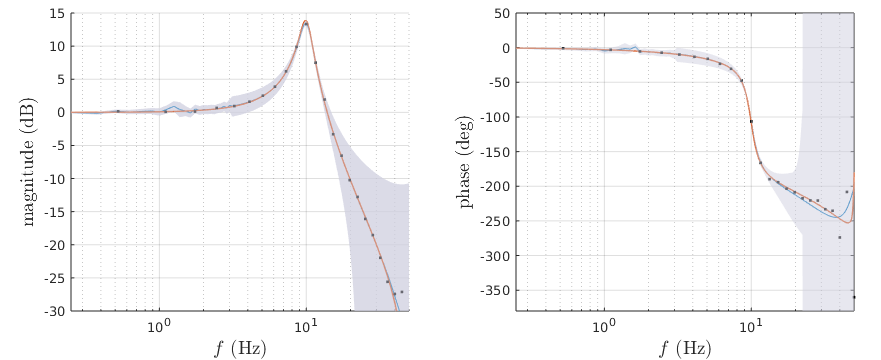}
    \caption{Bode plot of the second-order resonant system (orange), and its estimate
    (blue) using $H_\infty$ Gaussian process regression from an empirical
transfer function estimate (black points) with $\eta=3$ confidence ellipsoid bounds (grey).}%
    \label{fig:resonance_regression}
\end{figure}

We now demonstrate the process by applying the strictly-linear $H_\infty$ Gaussian process regression method
to the problem of identifying a second-order system that exhibits a resonance peak.
The system is specified in continuous time, with canonical second-order transfer
function
\begin{equation}
    g(s) = \frac{\omega_0^2}{s^2+2\xi\omega_0s + \omega_0^2},
\end{equation}
where $\omega_0=20\pi$ rad/s, and $\xi=0.1$, and converted to the discrete-time
transfer function $g(z)$ using a zero-order hold discretization with a sampling
frequency of $f_s=100$ Hz.
We suppose that we know \emph{a priori} that there is a resonance peak, but not
about its location or half-width, and we have no other strong information about
the frequency response. For this prior belief, an appropriate prior model is a
weighted mixture of a cozine process and a Hermitian stationary process.
In particular, we use the family of $H_\infty$ processes with covariance
functions
\begin{equation}
    k(z,w) = \sigma^2_g k_g(z,w) + \sigma^2_c k_c(z,w),
    \tilde{k}(z,w) = \sigma^2_g \tilde{k}_g(z,w) + \sigma^2_c \tilde{k}_c(z,w),
\end{equation}
where $k_g$ is the covariance of the geometric $H_\infty$ process defined in
Example~\ref{ex:geometric}, and $k_c$ is the covariance of the cozine process,
and likewise for the complementary covariance. $\sigma^2_g$ and $\sigma^2_c$ are
weights that determine the relative importance of the two
parts of the model. This family of covariances has five hyperparameters:
$\sigma_g\in[0,\infty)$,
$\alpha\in(0,1)$,
$\sigma_c\in[0,\infty)$,
$\omega_0\in[0,\pi]$, and
$a\in(0,1)$.

We suppose that an input trace $u(n)$ of Gaussian white noise with variance $\sigma^2_u=1/f_s$ is run through
$H_g$ yielding an output trace $y(n)$; our observations comprise these two
traces, corrupted by additive Gaussian white noise of variance
$\sigma^2=10^{-4}/f_s$. To obtain an empirical transfer function estimate,
we run both observation traces through a bank of 25 windowed 1000-tap DFT filters.
The impulse responses of the filter bank are
$h_i(n) = e^{j\omega_i n} w(n)$ for $i=1,\dotsc,25$,
with Gaussian window $w(n)=\exp(-\tfrac{1}{2}(\sigma_w(n-500) / 1000)^2)$ for
$n=0,\dotsc,999$, and $w(n)=0$ otherwise, with window half-width
$\sigma_w=0.25$.
Let $u_i$, $y_i$ denote the outputs of filter $h_i$ with inputs $u$, $y$
respectively: $y_i(n)/u_i(n)$ gives a running estimate of $g(e^{j\omega_i})$,
whose value after 1000 time steps we take as our observation at
$z_i=e^{j\omega_i}$.
Figure~\ref{fig:resonance_regression} shows the regression from the strictly
linear estimator~\eqref{eq:strictly_linear_prediction} after tuning the
covariance hyperparameters via maximum likelihood, along with predictive error
bounds based on $\eta=3$ confidence ellipsoids.

\section{Robustness Analysis with $H_\infty$ Gaussian Processes}
\label{sec:hinf-gp-robustness}

Now that we have explored how to refine an $H_\infty$ GP model from online data,
we move to a second and equally important problem: how to establish
probabilistic robustness guarantees for $H_\infty$ processes. Our goal is to
establish probabilistic guarantees of robustness; in other words,
establishing that a given certificate of robustness is obtained with probability
$\ge 1-\epsilon$ for a prescribed $\epsilon\in(0,1)$. For several classical
robustness certificates, this reduces to the problem of bounding
the \emph{gain excursion probability}
\begin{equation*}
    P_\gamma(f) = \Pr{\sup_{\Omega\in[-\pi,\pi)} |f(\ejo)| > \gamma}
\end{equation*}
of a general $H_\infty$ GP $f$, which
measures how likely the
$L_2$ gain of $H_f$ is to exceed the level $\gamma$. 
A simple example of how excursion problems arise is the problem of extending small-gain
arguments to the probabilistic case with an $H_\infty$ GP uncertainty.
Consider a
nominal plant $G$ in feedback with an uncertainty $\Delta$ modeled by an
$H_\infty$ GP. 
Assume the plant satisfies 
$\norm{G}_\infty < M$; then
it follows from the small-gain condition~\cite[Theorem
III.2.1]{desoer2009feedback} 
that the interconnection is stable if $\norm{\Delta}_\infty \le \frac{1}{M} \eqdef \gamma$. 
Thus bounding $P_{\gamma}(\Delta) \le \epsilon$
amounts to proving that the interconnection
is stable for an ensemble of realizations with probability $\ge 1-\epsilon$.

A more general example is the problem of proving an probabilistic
guarantee that an $H_\infty$ GP satisfies an integral quadratic constraint
(IQC). 
In the
discrete time SISO setting, an IQC with multiplier $\Pi$ is a behavioral
constraint on signals $(v,w)\in H_2\times H_2$
of the form~\cite{hu2017robustness,megretski1997system,veenman2016robust}
\begin{equation}
    \label{eq:iqc-def}
    \int_{-\pi}^\pi
    \begin{bmatrix}
        v(\ejo) \\ w(\ejo)
    \end{bmatrix}^*
    \Pi(\ejo)
    \begin{bmatrix}
        v(\ejo) \\ w(\ejo)
    \end{bmatrix}
    d\Omega
     \ge 0
\end{equation}
where $\Pi:[-\pi,\pi)\to\C^{2\times 2}$.
A conjugate-symmetric system $\Delta: H_2\to H_2$ is said to satisfy the IQC with a
multiplier $\Pi$ if~\eqref{eq:iqc-def} is satisfied by pairs $(v,\Delta v)$ for
all conjugate-symmetric $v\in H_2$.
IQCs are able to express a wide range of behavioral properties, and knowing that
an uncertainty satisfies a particular IQC is a powerful tool for constructing
controllers that are robust against uncertainties satisfying that IQC.

If $\Delta$ is an LTI operator, then under fairly mild conditions, the IQC has a
simple geometric interpretation.  Specifically, the if $\Delta$
satisfies~\eqref{eq:iqc-def}
then its frequency response $\Delta(\ejo)$ is constrained, pointwise in
frequency, to lie: (a) within a circle if $\Pi_{22}(\ejo)<0$; (b)
outside a circle if $\Pi_{22}(\ejo)>0$; or (c) on one side of a half
space if  $\Pi_{22}(\ejo)=0$.  We shall state this result formally
for case (a), but an analogous statement can be given for the other two
cases. The result is provided under the assumption
$\Pi_{22}(\ejo)=-1$. This normalization is without loss of generality
because we can scale the multiplier $\Pi$ by a positive constant.

\begin{lemma}[adapted from~\cite{pfifer2015integral}, Lemma 1 (i)]
        Suppose that an LTI system with conjugate-symmetric transfer function $\Delta$ satisfies an
        IQC with continuous, conjugate-symmetric multiplier $\Pi$ normalized to
        $\Pi_{22}(\ejo) = -1$. Then for
        each $\Omega\in[-\pi,\pi)$, $\Delta(\ejo)$ lies in a circle in the Nyquist
        plane with center
        $\Pi_{21}(\ejo)$ and radius 
 
        \noindent $\sqrt{|\Pi_{11}(\ejo)|^2 + |\Pi_{21}(\ejo)|^2}$.
\end{lemma}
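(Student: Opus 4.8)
The plan is to work directly with the quadratic form in \eqref{eq:iqc-def} for an LTI $\Delta$, reduce the IQC to a pointwise inequality in frequency, and then complete the square. First I would specialize \eqref{eq:iqc-def} to pairs $(v,\Delta v)$ with $\Delta$ LTI, so that $w(\ejo)=\Delta(\ejo)v(\ejo)$ pointwise; writing out the $2\times 2$ form with $\Pi_{22}(\ejo)=-1$ and using the conjugate symmetry of $\Pi$ (so $\Pi_{12}=\Pi_{21}^*$), the integrand becomes
\begin{equation*}
    |v(\ejo)|^2\Bigl(\Pi_{11}(\ejo) + 2\ReOp\bigl[\Pi_{21}^*(\ejo)\Delta(\ejo)\bigr] - |\Delta(\ejo)|^2\Bigr).
\end{equation*}
Thus the IQC says $\int_{-\pi}^{\pi} |v(\ejo)|^2 q(\ejo)\,d\Omega \ge 0$ for all conjugate-symmetric $v\in H_2$, where $q(\ejo)$ is the parenthesized real-valued function.

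The next step is to argue that $q(\ejo)\ge 0$ pointwise. This is the standard localization argument: if $q$ were negative on some arc, continuity of $q$ (which follows from continuity of $\Pi$ and of $\Delta$ on the unit circle, $\Delta\in H_\infty$ being analytic up to the boundary in the relevant cases) would let us choose a $v\in H_2$ whose energy $|v(\ejo)|^2$ concentrates on that arc — e.g.\ a suitably truncated/weighted trigonometric polynomial, symmetrized to be conjugate symmetric — making the integral negative, a contradiction. So $q(\ejo)\ge 0$ for all $\Omega$.

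Finally I would complete the square: $q(\ejo)\ge 0$ is equivalent to
\begin{equation*}
    |\Delta(\ejo)|^2 - 2\ReOp\bigl[\Pi_{21}^*(\ejo)\Delta(\ejo)\bigr] + |\Pi_{21}(\ejo)|^2 \le |\Pi_{11}(\ejo)|^2 + |\Pi_{21}(\ejo)|^2,
\end{equation*}
i.e.\ $|\Delta(\ejo) - \Pi_{21}(\ejo)|^2 \le |\Pi_{11}(\ejo)|^2 + |\Pi_{21}(\ejo)|^2$, which is exactly the claimed disk with center $\Pi_{21}(\ejo)$ and radius $\sqrt{|\Pi_{11}(\ejo)|^2+|\Pi_{21}(\ejo)|^2}$. (One should also note $\Pi_{11}(\ejo)\ge -|\Pi_{21}(\ejo)|^2$ is automatic from $q\ge0$ at $\Delta=\Pi_{21}$, so the radius is well-defined.)

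The main obstacle is the localization step: making precise the choice of a conjugate-symmetric $v\in H_2$ whose spectral energy is concentrated near an arbitrary frequency $\Omega_0$ (and its conjugate $-\Omega_0$, to respect conjugate symmetry) so as to extract the pointwise inequality from the integral one. This is routine harmonic analysis — approximate identities / Fejér-type kernels restricted to $H_2$ — but it is the only place where real work beyond algebra is needed, and it is also where the continuity hypothesis on $\Pi$ (and the boundary regularity of $\Delta$) is essential. Since the lemma is cited as adapted from \cite{pfifer2015integral}, I would lean on that reference for the localization details and present the completion-of-the-square computation in full.
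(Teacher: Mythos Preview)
The paper does not supply its own proof of this lemma; it is simply cited as adapted from \cite{pfifer2015integral}, so there is nothing to compare against. Your outline---substitute $w=\Delta v$, reduce to a pointwise inequality by a localization/approximate-identity argument, then complete the square---is exactly the standard route and is correct in structure.

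Two points to tighten. First, an algebraic slip: from $q(\ejo)=\Pi_{11}+2\ReOp[\Pi_{21}^*\Delta]-|\Delta|^2\ge 0$ you get, after adding $|\Pi_{21}|^2$ to both sides,
\[
|\Delta(\ejo)-\Pi_{21}(\ejo)|^2 \le \Pi_{11}(\ejo)+|\Pi_{21}(\ejo)|^2,
\]
so the squared radius is $\Pi_{11}+|\Pi_{21}|^2$, not $|\Pi_{11}|^2+|\Pi_{21}|^2$ as you wrote (and as the lemma statement has it). Note that the paper's own follow-up display~\eqref{eq:iqc-gp-condition} uses $\Pi_{11}$ without the square, which is the algebraically correct form; the squared version in the lemma statement appears to be a typo you have reproduced. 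Second, a terminological quibble: the identity $\Pi_{12}=\Pi_{21}^*$ that you need is the \emph{Hermitian} property of the multiplier at each frequency (implicit in the IQC definition so the integrand is real), not the ``conjugate symmetry'' $\Pi(e^{-j\Omega})=\Pi(e^{j\Omega})^*$ in the paper's sense; you should state it as such. With those two fixes your argument is complete, and your localization sketch (concentrate $|v|^2$ near $\pm\Omega_0$ using a conjugate-symmetric $H_2$ approximate identity, invoke continuity of $q$) is adequate at the level of a cited lemma.
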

This condition is evidently equivalent to the condition that
\begin{equation}
    \frac{|\Delta(\ejo) - \Pi_{21}(\ejo)|}{\sqrt{|\Pi_{11}(\ejo)|^2 + |\Pi_{21}(\ejo)|^2}} \le 1\ \forall\Omega\in[-\pi,\pi).
\end{equation}
Thus the problem of
establishing that an $H_\infty$ GP satisfies an IQC with probability $\ge 1-\epsilon$
reduces to a gain excursion probability problem, namely
proving that
\begin{equation}
    \Pr{
        \sup_{\Omega\in[-\pi,\pi)} \left|\frac{\Delta(\ejo) - \Pi_{21}(\ejo)}{\sqrt{|\Pi_{11}(\ejo)|^2 + |\Pi_{21}(\ejo)|^2}}\right| > 1
    } \le \epsilon,
\end{equation}
or in other words that
\begin{equation}
    \label{eq:iqc-gp-condition}
    P_1\left(
        \frac{\Delta(\ejo) - \Pi_{21}(\ejo)}{\sqrt{\Pi_{11}(\ejo) + |\Pi_{21}(\ejo)|^2}}
    \right) \le \epsilon.
\end{equation}

\subsection{Bounding the excursion probability}

Having established how gain excursion probabilities arise in proving
probabilistic safety guarantees for $H_\infty$ uncertainties, we turn to the
problem of how to bound these probabilities.
It is not generally possible to directly compute $P_\gamma(f)$; however, we can bound
it from above using a related quantity, the expected number of gain upcrossings.

Associated to any
$H_\infty$ Gaussian process is its
\emph{gain process} $ |f(\ejo)|, \Omega\in[-\pi,\pi)$. Assuming
that the gain process is differentiable with respect to $\Omega$, a
\emph{gain upcrossing} at level $\gamma$ is a value $\Omega_c$ such that
$|f(\ej{\Omega_c})|=\gamma$ and $\tfrac{\partial}{\partial\Omega}
|f(\ej{\Omega_c})|>0$.
Under the assumptions given so far, there can be at most finitely many
upcrossings, so the random variable 
$N_\gamma = \#\{\Omega\in[-\pi,\pi) : |f(\ejo)|=\gamma\text{ and } \tfrac{\partial}{\partial\Omega} |f(\ejo)|>0\}$,
where $\#S$ denotes the cardinality of a set $S$,
is well-defined and its
expectation $\Ex{N_\gamma}$ is almost surely finite.
A simple application of Markov's
inequality yields the bound
\begin{equation}
    \label{eq:excursion-bound}
    \begin{aligned}
        P_\gamma(f)
        &= \Pr{|f(\ej{0})| > \gamma} + \Pr{|f(\ej{0})| \le \gamma, N_\gamma \ge 1}\\
        &\le \Pr{|f(\ej{0})| > \gamma} + \Ex{N_\gamma}.
    \end{aligned}
\end{equation}
The reason that we consider a bound for $P_\gamma$ rather than a direct computation
is that direct computation of $P_\gamma$ is only possible in the simplest cases. On
the other hand, $\Ex{N_\gamma}$ can be computed with a closed-form (though sometimes
complicated) expression as long as the process is differentiable. 

The lack of Gaussian structure in $|f(\ejo)|$ would make it difficult to compute
this formula
directly from $f$ (e.g. by applying a Rice formula like~\cite[Theorem 3.4]{azais2009level}).
To overcome the difficulty, 
we reframe the problem as a \emph{vector crossing} problem on the vector
Gaussian process
$g(\Omega) = (x(\ejo), y(\ejo))$ formed from the real and imaginary parts:
the gain process $|f(\ejo)|$ crosses from $\le \gamma$ to $> \gamma$ precisely when
the vector process $g(\Omega)$ crosses from the interior of
the circle $x^2 + y^2 = \gamma^2$ to the exterior.
By taking this perspective, we relinquish the
topological simplicity of the scalar crossing problem in order to retain the
Gaussian structure of the stochastic process. While we cannot apply Rice
formulas in the vector setting, there are analogous results for counting vector
crossings. We use the following result due to Belyaev.

\begin{theorem}[first-order Belyaev formula~\cite{belyaev1968number}] 
    \label{thm:belyaev_formula}
    Let $g:\Xi\times [0,T]\to\R^n$ be a vector-valued stochastic process and
    $\Phi:\R^n\to\R$ a boundary function satisfying the following conditions:
    \begin{enumerate}
        \item $g$ is continuously differentiable with probability one, and the
            random variables $g(t)$, $t\in T$ all possess densities $p_{g(t)}$;
        \item The conditional densities $p(x|y)$ exist for $x=g(t)$,
            $y=g^\prime(t)$, and the densities depend continuously on $x$.
        \item $\Phi$ is continuously differentiable, and to each
            $\epsilon$-neighborhood of the surface 
            $S_\Phi=\{x\in\R^n : \Phi(x)=0\}$ 
            we can associate coordinates
            $(\phi, \zeta_1,\dotsc,\zeta_{n-1})$, where
            $\phi=\inf_{y\in S_\Phi} ||x-y||_2$;
    \end{enumerate}
    Let $N_\Phi$ denote the number of times a realization $g_\xi$ of $f$ exits
    the surface $S_\Phi$: then
    \begin{equation}
        \label{eq:belyaev_formula}
        \Ex{N_\Phi} 
        = \int_0^T\int_{S_\phi} \Ex{n{((x)}^\top g(t))_+ | g'(t)=x} p_{f(t)}(x) ds(x) dt,
    \end{equation}
    where $(x)_+ = \max(0,x)$, and
    $n(x)$ is the outward-facing unit normal vector of $S_\Phi$ at the point $x$.
\end{theorem}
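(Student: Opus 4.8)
The plan is to reduce the vector exit-count to a scalar level-crossing count, apply a Kac--Rice-type formula there, and then translate back to the geometry of $S_\Phi$; the bridge between the two pictures is the signed distance function.

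First I would let $\phi$ denote the signed distance to $S_\Phi$ --- negative on the ``inside'', positive on the ``outside'' --- which agrees up to sign with the coordinate $\phi$ supplied by hypothesis~3, so that $\phi$ is $C^1$ on an $\epsilon$-neighborhood $U$ of $S_\Phi$ with $\nabla\phi(x)=n(x)$ for $x\in S_\Phi$ and $\norm{\nabla\phi(x)}_2\equiv 1$ on $U$. Define the scalar process $h(t)=\phi(g(t))$ for the times at which $g(t)\in U$. Since $g$ is $C^1$ with probability one (hypothesis~1) and every crossing of $S_\Phi$ by $g$ occurs inside $U$, the exits of $g$ through $S_\Phi$ are in bijection with the strict upcrossings of level $0$ by $h$, so $N_\Phi$ is exactly the number of such upcrossings on $[0,T]$.

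Next I would invoke the general (non-Gaussian) Rice formula for the expected number of upcrossings of $0$ by a $C^1$ process: provided $p_{h(t)}$ exists and is continuous and the conditional density of $h'(t)$ given $h(t)$ exists, one has
\begin{equation*}
    \Ex{N_\Phi} = \int_0^T \Ex{(h'(t))_+ \,\big|\, h(t)=0}\, p_{h(t)}(0)\, dt.
\end{equation*}
These scalar hypotheses follow from hypotheses~1 and~2 of the theorem: $h'(t)=\nabla\phi(g(t))^\top g'(t)$ is a continuous function of $(g(t),g'(t))$, and pushing the joint law of $(g(t),g'(t))$ forward through the $C^1$ map $(x,v)\mapsto(\phi(x),\nabla\phi(x)^\top v)$ gives the law of $(h(t),h'(t))$ with the required conditional-density structure; the continuity in $x$ from hypothesis~2 is what licenses exchanging the limiting ``approximate-$\delta$'' (or discretized-time) counting procedure with the expectation. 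Finally I would re-express the scalar formula geometrically: on $\{h(t)=0\}$ we have $g(t)\in S_\Phi$ and $h'(t)=n(g(t))^\top g'(t)$, and disintegrating the law of $g(t)$ along the coordinates $(\phi,\zeta_1,\dotsc,\zeta_{n-1})$ --- equivalently applying the coarea formula, which carries no extra Jacobian since $\norm{\nabla\phi}_2\equiv 1$ --- yields $p_{h(t)}(0)=\int_{S_\Phi} p_{g(t)}(x)\,ds(x)$ and turns the conditioning on $h(t)=0$ into an average over $x\in S_\Phi$ of $\Ex{(n(x)^\top g'(t))_+\mid g(t)=x}$, giving~\eqref{eq:belyaev_formula}.

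The step I expect to be the main obstacle is the rigorous justification behind the scalar Rice formula for the \emph{non-Gaussian} process $h$: showing that the approximate-$\delta$ (or discretized-time) count converges to $\Ex{N_\Phi}$ and may be interchanged with the $t$-integral. This is exactly where hypotheses~2 and~3 earn their keep --- hypothesis~3 makes the change of variables near $S_\Phi$ well-defined and nonsingular, while hypothesis~2 supplies the uniform control (continuity of the conditional density in $x$) needed for dominated convergence. A secondary technicality, dispatched by a Bulinskaya-type argument using that $g(t)$ has a density, is to show that with probability one no crossing is tangential (no simultaneous $h(t)=0$ and $h'(t)=0$) and that crossings do not accumulate, so that $N_\Phi$ is a.s. finite and the count is unambiguous.
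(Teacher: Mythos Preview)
The paper does not prove this theorem: it is stated as a quoted result from Belyaev~\cite{belyaev1968number} and then used as a black box in the proof of Theorem~\ref{thm:gain_upcrossing_formula}. There is therefore no in-paper proof to compare your proposal against.

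On its own merits, your outline is the standard route to such formulas and is sound: pass to the signed-distance coordinate so that exits of $g$ through $S_\Phi$ become scalar upcrossings of $0$ by $h(t)=\phi(g(t))$; apply a non-Gaussian Rice formula to $h$; and then disintegrate the conditioning $\{h(t)=0\}$ over $S_\Phi$ via the coarea formula, which carries no Jacobian because $\norm{\nabla\phi}_2\equiv 1$ near $S_\Phi$. You have also correctly flagged the two real obligations---justifying the limit exchange in the scalar Rice step (this is where hypothesis~2 is spent) and a Bulinskaya-type argument ruling out tangential crossings and accumulation (this is where the existence of a density for $g(t)$ is spent).

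One incidental observation: as printed, the formula~\eqref{eq:belyaev_formula} and parts of the proof of Theorem~\ref{thm:gain_upcrossing_formula} appear to have $g$ and $g'$ interchanged in the conditioning. Your derivation lands on the standard form $\Ex{(n(x)^\top g'(t))_+\mid g(t)=x}\,p_{g(t)}(x)$, which is what the paper actually computes despite the labeling.
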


Applying the first-order Belyaev formula to the real and imaginary parts of an $H_\infty$ GP and the surface
$x^2+y^2=\gamma^2$ yields the following formula for the expected number of gain
upcrossings, whose proof is deferred to
Section~\ref{sec:proofs_of_main_results}. 
 
\begin{theorem}
    \label{thm:gain_upcrossing_formula}
    Consider an $H_\infty$ Gaussian process $f$ with mean $m_x + jm_y$ and
    Hermitian and complementary covariances $k$, $\tilde{k}$. Let $N_\gamma$ denote
    the integer-valued random variable that counts the number of $\gamma$-level gain
    upcrossings of $f$. 
    Suppose that $m_x(\ejo)$ and $m_y(\ejo)$ are differentiable with respect to
    $\Omega$ and that 
    $k(\ej{\Omega_1},\ej{\Omega_2})$
    and
    $\tilde{k}(\ej{\Omega_1},\ej{\Omega_2})$
    are thrice differentiable with respect to $\Omega_1$ and $\Omega_2$.
    Then the expected number of gain upcrossings between
    frequencies $-\pi$ and $\pi$ is
    \begin{equation}
        \begin{aligned}
            \label{eq:gain_upcrossing_formula}
            \Ex{N_\gamma}
            &=
            \int_{-\pi}^\pi \int_0^{2\pi}
            \frac{\gamma}{2\pi}
            \Bigg(
                \frac{\sigma_{z(\theta)}(\Omega)}{\sqrt{2\pi}}
                e^{-\frac{1}{2}
                (\mu_{z(\theta)}(\Omega)/\sigma_{z(\theta)}(\Omega))^2}\\
            &\qquad\qquad\qquad+
                \frac{1}{2} \mu_{z(\theta)}(\Omega)
                \left(1+\erf\left(\frac{\mu_{z(\theta)}(\Omega)}{\sqrt{2}
                \sigma_{z(\theta)}(\Omega)}\right)\right)
            \Bigg)\\
            &\times \det\Sigma(\Omega,\Omega)^{-1/2}
                e^{-\frac{1}{2}(z(\theta)-m(\Omega))^\top \Sigma(\Omega,\Omega)^{-1}(z(\theta)-m(\Omega)} d\theta d\Omega,
        \end{aligned}
    \end{equation}
    where
    \begin{align}
        z(\Omega) &= [\gamma \cos(\Omega); \gamma\sin(\Omega)]\\
        \mu_z(\Omega)
        &=\gamma^{-1}z^\top m(\Omega)^\prime + \gamma^{-1}z^\top C(\Omega,\Omega)\Sigma(\Omega,\Omega)^{-1}(z-m(\Omega))\\
        \sigma_z(\Omega)
        &=\gamma^{-2} z^\top(\Sigma^\prime(\Omega,\Omega) - C(\Omega,\Omega)(\Sigma(\Omega,\Omega)^{-1}C(\Omega,\Omega)^\top) z
    \end{align}
    \begin{align}
        \Sigma(\Omega,\Omega)
        &=
        \begin{bmatrix}
            k_x(\Omega,\Omega) & k_c(\Omega,\Omega) \\
            k_c(\Omega,\Omega) & k_y(\Omega,\Omega)
        \end{bmatrix},
        \Sigma^\prime(\Omega,\Omega)
        =
        \begin{bmatrix}
            k_x^{12}(\Omega,\Omega) & k_c^{12}(\Omega,\Omega) \\
            k_c^{12}(\Omega,\Omega) & k_y^{12}(\Omega,\Omega)
        \end{bmatrix},\\
        C(\Omega,\Omega)
        &=
        \begin{bmatrix}
            k_x^1(\Omega,\Omega) & k_c^1(\Omega,\Omega) \\
            k_c^2(\Omega,\Omega) & k_y^1(\Omega,\Omega)
        \end{bmatrix},
        m(\Omega)=
        \begin{bmatrix}
            m_x(\ejo) \\ m_y(\ejo)
        \end{bmatrix},
    \end{align}
    \begin{align}
        k_x(\Omega_1,\Omega_2)
        &= 
        \tfrac{1}{2}\Re{k(\ej{\Omega_1},\ej{\Omega_2}) + \tilde{k}(\ej{\Omega_1},\ej{\Omega_2})}
        \\
        k_y(\Omega_1,\Omega_2)
        &= 
        \tfrac{1}{2}\Re{k(\ej{\Omega_1},\ej{\Omega_2}) - \tilde{k}(\ej{\Omega_1},\ej{\Omega_2})}
        \\
        k_c(\Omega_1,\Omega_2)
        &= 
        \tfrac{1}{2}\Im{\tilde{k}(\ej{\Omega_1},\ej{\Omega_2}) - k(\ej{\Omega_1},\ej{\Omega_2})}
    \end{align}
and where the superscripts denote
derivatives, e.g. 
$
    k_x^{12}=\partial^2 k_x / \partial \Omega_1 \Omega_2.
$
\end{theorem}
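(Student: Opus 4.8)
The plan is to apply the first-order Belyaev formula of Theorem~\ref{thm:belyaev_formula} to the $\R^2$-valued process $g(\Omega) = (x(\ejo),\, y(\ejo))$ built from the real and imaginary parts of $f$, on the interval $[0,2\pi)$ (identified with $[-\pi,\pi)$ by periodicity), with boundary function $\Phi(x,y) = x^2 + y^2 - \gamma^2$, whose zero set $S_\Phi$ is the circle of radius $\gamma$. The first step is to translate the complex covariance data of $f$ into the real covariance data of $g$: writing $f = x + jy$, one has $\Ex{x_1 x_2} = \tfrac12\Re{k+\tilde{k}}$, $\Ex{y_1 y_2} = \tfrac12\Re{k-\tilde{k}}$, and $\Ex{x_1 y_2} = \tfrac12\Im{\tilde{k}-k}$, which are precisely the kernels $k_x$, $k_y$, $k_c$ of the statement; since $k$ and $\tilde{k}$ are thrice differentiable in each argument, differentiation in $\Omega$ commutes with expectation, so the joint law of $(g(\Omega), g'(\Omega))$ is Gaussian with position covariance $\Sigma(\Omega,\Omega)$, velocity covariance $\Sigma'(\Omega,\Omega)$, and position--velocity cross-covariance $C(\Omega,\Omega)$, exactly as in the statement (with the usual care over which argument is differentiated and the resulting transpose in $C$). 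I would then record the elementary geometric identity that a $\gamma$-level gain upcrossing of $f$ is the same event as an outward crossing of $S_\Phi$ by $g$: at $\Omega_c$ with $|f(\ej{\Omega_c})| = \gamma$, the sign of $\tfrac{\partial}{\partial\Omega}|f|$ equals that of $\tfrac{\partial}{\partial\Omega}|f|^2 = 2\,g(\Omega_c)^\top g'(\Omega_c)$, i.e.\ of the outward radial velocity, so $N_\gamma = N_\Phi$.

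Next I would verify the hypotheses of Theorem~\ref{thm:belyaev_formula}. The differentiability of $m_x$, $m_y$ together with the thrice differentiability of $k$, $\tilde{k}$ yields, via a Kolmogorov-continuity argument applied to the derivative process, a modification of $g$ that is continuously differentiable with probability one, so condition~1 holds provided the Gaussian laws of $g(\Omega)$ are nondegenerate; condition~2 (existence and continuous dependence of the conditional densities of $g'(\Omega)$ given $g(\Omega)$) likewise follows from nondegeneracy of the joint law of $(g(\Omega), g'(\Omega))$, i.e.\ positive definiteness of $\Sigma(\Omega,\Omega)$ and of the Schur complement $\Sigma'(\Omega,\Omega) - C(\Omega,\Omega)\Sigma(\Omega,\Omega)^{-1}C(\Omega,\Omega)^\top$, which I take as a standing nondegeneracy assumption on the kernels. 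Condition~3 is immediate: $\Phi$ is smooth and $S_\Phi$ is a smooth compact curve, so polar coordinates supply tubular coordinates $(\phi,\zeta) = (r-\gamma,\theta)$ near $S_\Phi$; the same nondegeneracy, via a Bulinskaya-type argument, also ensures that $N_\gamma$ is almost surely finite.

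With the hypotheses in hand I would evaluate \eqref{eq:belyaev_formula}. Parameterize $S_\Phi$ by $z(\theta) = (\gamma\cos\theta,\, \gamma\sin\theta)$, $\theta\in[0,2\pi)$, with arclength element $ds = \gamma\, d\theta$ and outward unit normal $n(z(\theta)) = z(\theta)/\gamma$. Gaussian conditioning gives $g'(\Omega)\mid g(\Omega) = z \sim \Norm{m'(\Omega) + C\Sigma^{-1}(z-m(\Omega)),\ \Sigma' - C\Sigma^{-1}C^\top}$ (all blocks at $(\Omega,\Omega)$), so the scalar radial velocity $n^\top g'(\Omega)$ is, conditionally on $g(\Omega) = z$, Gaussian with mean $\gamma^{-1}z^\top m'(\Omega) + \gamma^{-1}z^\top C\Sigma^{-1}(z-m(\Omega))$ and variance $\gamma^{-2}z^\top(\Sigma' - C\Sigma^{-1}C^\top)z$, which are the quantities $\mu_{z(\theta)}(\Omega)$ and $\sigma_{z(\theta)}(\Omega)$ of the statement. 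The inner conditional expectation $\Ex{(n^\top g'(\Omega))_+ \mid g(\Omega) = z(\theta)}$ in \eqref{eq:belyaev_formula} is then the expected positive part of this Gaussian, and the standard identity $\Ex{Z_+} = \tfrac{\sigma}{\sqrt{2\pi}}e^{-\mu^2/(2\sigma^2)} + \tfrac{\mu}{2}(1+\erf(\mu/(\sqrt{2}\,\sigma)))$ for $Z\sim\Norm{\mu,\sigma^2}$ produces the large parenthesized factor of \eqref{eq:gain_upcrossing_formula}. Multiplying by the bivariate density $p_{g(\Omega)}(z(\theta)) = (2\pi)^{-1}\det\Sigma(\Omega,\Omega)^{-1/2}\exp(-\tfrac12(z(\theta)-m(\Omega))^\top\Sigma(\Omega,\Omega)^{-1}(z(\theta)-m(\Omega)))$ and the arclength Jacobian $\gamma$, and integrating $d\theta\,d\Omega$ over $[0,2\pi)\times[-\pi,\pi)$, collects the prefactor $\gamma/(2\pi)$ and yields \eqref{eq:gain_upcrossing_formula}.

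The routine part is the Gaussian-conditioning algebra and the positive-part identity; the main obstacle is the analytic side of the Belyaev hypotheses — upgrading mean-square differentiability of $g$ to almost surely $C^1$ sample paths (this is where the third derivative of the kernels is spent, to obtain a Kolmogorov moment bound on $g'$) and establishing nondegeneracy of the joint law of $(g(\Omega), g'(\Omega))$ strongly enough to guarantee the continuous conditional densities and the a.s.\ finiteness of $N_\gamma$. A further point of care worth flagging is that the identification $N_\gamma = N_\Phi$ tacitly requires $\gamma > 0$, so that $S_\Phi$ is a genuine smooth curve with a well-defined radial direction, and also requires the exclusion of tangential crossings, both of which follow from the same nondegeneracy.
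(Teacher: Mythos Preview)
Your proposal is correct and follows essentially the same approach as the paper: verify the Belyaev hypotheses for the vector process $g=(x,y)$ with $\Phi(x,y)=x^2+y^2-\gamma^2$, parameterize the circle $S_\Phi$ in polar coordinates, compute the Gaussian conditional law of the radial velocity $n^\top g'(\Omega)\mid g(\Omega)=z(\theta)$, apply the rectified-Gaussian mean identity, and assemble the integral. If anything you are more careful than the paper on a few points---you make explicit the nondegeneracy assumption needed for the conditional densities, state the conditioning direction correctly (the paper writes $g\mid g'=z$ but clearly means and uses $g'\mid g=z$), and flag the $\gamma>0$ and tangential-crossing caveats.
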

\begin{remark}
    If $f$ is conjugate-symmetric, then $|f|$ is an even function of $\Omega$,
    meaning that the probability of a gain bound violation over
    $\Omega\in[-\pi,\pi)$ is equal to the violation probability over the reduced
    range $\Omega\in[0,\pi]$. In this case, we need only compute the expected
    number of gain upcrossings between $0$ and $\pi$, and the lower limit of the
    outer integral in~\eqref{eq:gain_upcrossing_formula} can be changed from
    $-\pi$ to zero.
\end{remark}

\begin{figure}[htpb]
    \centering
    \includegraphics[width=0.49\linewidth]{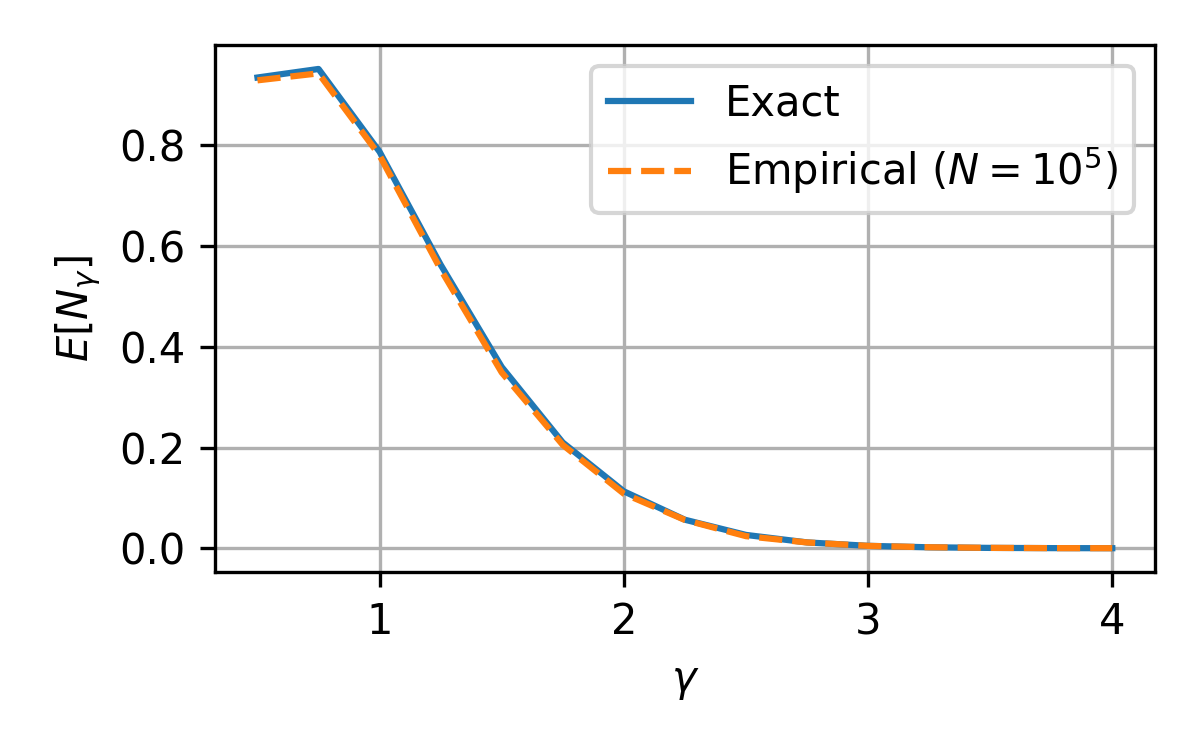}
    \includegraphics[width=0.49\linewidth]{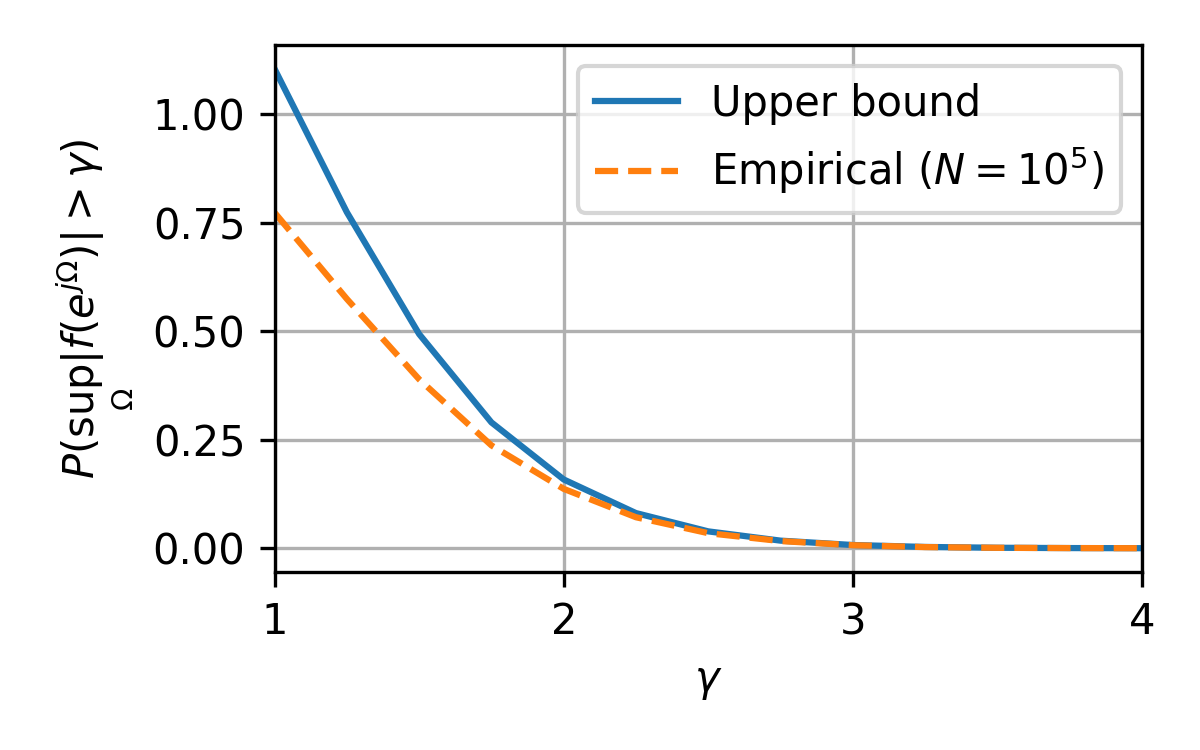}
    \caption{Evaluating the Belyaev formula and the excursion gain bound on a
        geometric $H_\infty$ process with $\alpha=0.5$.
        Left: Numerical comparison of~\eqref{eq:belyaev_formula} with an
        empirical estimate ($N=100,000$) of $\Ex{N_\gamma}$. Right: Numerical comparison
of~\eqref{eq:excursion-bound} with an empirical estimate ($N=100,000$) of $P_\gamma(f)$.}%
    \label{fig:belyaev-experiment} 
\end{figure} 

We next use a simple example to numerically demonstrate the validity of the gain upcrossing formula~\eqref{eq:gain_upcrossing_formula}
and to test the tightness of the bound~\eqref{eq:excursion-bound}. Our example
uses a geometric $H_\infty$ GP $f$ with $\alpha=0.5$
(Example~\ref{ex:geometric}) and a range of threshold values $\gamma\in[1,4]$.
The
expected $H_2$ norm of a geometric $H_\infty$ process is $1$, so it's reasonable
to expect $\Ex{N_\gamma}$ and $P_\gamma(f)$ to be relatively high near $\gamma=1$, and to taper
off relatively quickly. Figure~\ref{fig:belyaev-experiment} shows the results of
the numerical evaluation of
Equations~\eqref{eq:gain_upcrossing_formula},~\eqref{eq:excursion-bound} and compares
the results to empirical approximations of $\Ex{N_\gamma}$ and $P_\gamma(f)$ made using
$N=100,000$ process realizations. We can conclude from the figure
that~\eqref{eq:belyaev_formula} accurately computes $\Ex{N_\gamma}$ as expected.
Furthermore, we see that~\eqref{eq:excursion-bound} indeed upper bounds
$P_\gamma\gamma(f)$; while the bound is conservative in regions where $P_\gamma(f)$ is high, it
quickly becomes tighter where $P_\gamma(f)$ is small, levelling off to overestimate
$P_\gamma(f)$ by $\sim 10\%$ for $\gamma>2.5$. Since we will generally engineer $P_\gamma(f)$
to be small in applications, the experiment shows
that~\eqref{eq:excursion-bound} is not a conservative bound in regions of
practical interest.

\section{Conclusion}%
\label{sec:conclusion}

This paper has three principal contributions: how to construct $H_\infty$ GPs,
how to refine an $H_\infty$ GP model with data, and how to prove robustness
guarantees for systems with probabilistic $H_\infty$ feedback uncertainties. The
fact that $H_\infty$ GPs are amenable both to regression and rigorous robustness
analysis is a critical advantage in applications calling for robustly safe
learning of uncertainties of \emph{a priori} unknown order; such uncertainties
frequently arise when flexible mechanics (e.g. flexible modes, aeroelastic
coupling, soft robotic parts) are neglected in the nominal plant.

The regression-based refinement method described in
Section~\ref{sec:regression} depends strongly on the method used to convert
time-domain data to frequency-domain, many of which (such as filter banks and
ETFEs) are sensitive to measurement noise. A more direct application of
time-domain data to the frequency-domain model, perhaps effected by a
\emph{projected-process regression}~\cite[\S 8.3.4]{gpml}.
Even within the regime of refinement via frequency-domain data, there remains
some mystery around the performance gap between strictly linear and widely
linear estimators.
We conjecture that the similarity in performance is due to the perfect correlation between real and
imaginary parts of the transfer function induced by  via the discrete Hilbert
transform~(see~\cite{rabiner1975theory}, \S 2.26) induced by the symmetry
condition required for the process to give real outputs to real inputs. In the
large-sample limit, this condition causes the prior regression covariance to
become \emph{maximally improper}, a special case in which the strictly linear
estimator is indeed optimal.

A key direction for future work is to extend $H_\infty$ GP regression and
Belyaev-based bounds on excursion probabilities to the MIMO case. MIMO regression
should pose little difficulty; a simple tactic, if no better can be found, is to
perform regression independently on each element of a transfer function matrix.
For MIMO excursion bounds, such an independent approach would likely not be as
effective; instead we would need to directly examine MIMO gains (e.g. by a
probabilistic extension of sigma plots) or direct examination of probabilistic
IQCs. In either case, the near-Gaussian structure enjoyed in the SISO case will
likely not hold, so a more general theory of excursion probabilities will be
necessary.

\section{Proofs of Main Results}%
\label{sec:proofs_of_main_results}

\begin{proof}[Proof of Lemma~\ref{lem:analytic-process-compact}]
    We begin by constructing a proper z-domain process $f$ with covariance $k$ from
    an orthonormal basis for the reproducing kernel Hilbert space $\Hk{k}$.
    Then, we establish that $f$ has a mean-square derivative $f_1$ that has a
    continuous version. Finally, we use $f_1$ and a point evaluation $f(z_0)$ to
    construct a process $g$ that has complex differentiable (i.e. holomorphic
    and therefore analytic) sample paths, which we then show is pointwise
    identical to $f$.

    Since $k$ is \emph{a forteriori} continuous, $\Hk{k}$ is separable and
    admits a countable orthonormal basis $\{\phi_n\}_{n=0}^\infty$, and we can express $k$ as the
    pointwise and uniformly converging sum
    $
        k(z,w) = \sum_{n=0}^\infty \phi_n(z)\phi^*_n(w).
    $
    Let $\{a_n\}_{n=0}^\infty$, $\{b_n\}_{n=0}^\infty$ denote two sequences of
    independent standard normals, that is
    $a_1,b_1,a_2,b_2,\dotsc\simiid\Norm{0,1}$. Then the random function
    $
        f(z) = \sum_{n=0}^\infty \frac{1}{\sqrt{2}}(a_n + jb_n)\phi_n(z)
    $
    is a z-domain Gaussian process on the set of $z$ where the sum converges.
    From the summation form of $f$ we can verify
            $\Ex{f(z)f(w)^*} = k(z,w), $
            $\Ex{f(z)f(w)} = 0,$
    making $f$ a proper z-domain process where it exists; and since $k(z,z)$ is
    finite for all $z\in D$, it follows that the pointwise sums $f(z)$ converge
    for all $z\in D$.

    We now prove that $f$ admits a version $g$ whose realizations are analytic.
    To do so, we demonstrate that $f$ admits a mean-square derivative $f_1$
    which admits a continuous version, which we then use to construct a version
    of $f$ that is holomorphic and therefore analytic.

    For the real-domain case, it is well known that a stochastic process
    possesses a mean-square derivative if its covariance function is twice
    differentiable, and that the covariance of the mean-square derivative is
    the first derivative of the covariance with respect to both
    arguments~\cite[\S 34]{loeve2017probability}.
    The complex-domain case is nearly the same, except that the derivative with respect
    to the second argument must be a \emph{conjugate derivative}. Specifically,
    suppose that the limit
    \begin{equation}
        \begin{aligned}
            &\lim_{z\to z_0, w\to w_0}
            \Ex{
                \left(\frac{f(z) - f(z_0)}{z-z_0}\right)
                \left(\frac{f(w) - f(w_0)}{w-w_0}\right)^*
            }\\
            &=\lim_{z\to z_0, w\to w_0}
            \frac{k(z,w) - k(z,w_0) - k(z_0,w) + k(z_0,w_0)}{(z-z_0)(w-w_0)^*}
            = \frac{\partial^2}{\partial z \partial w^*} k
        \end{aligned}
    \end{equation}
    exists for all $z_0, w_0 \in D$: then any proper z-domain process $f_1$ with
    Hermitian covariance $\frac{\partial^2}{\partial z \partial w^*} k$ is a
    mean-square derivative of $f$,\footnote{The fact that
    $\frac{\partial^2}{\partial z \partial w^*}$ k is indeed a valid
    Hermitian covariance is established by the fact that positive definite functions
    are closed under limits.}
    since it satisfies the mean-square differentiability criterion
        $\lim_{z\to z_0}
        \Ex{
            \left| \frac{f(z) - f(z_0)}{z-z_0} \right| - f_1(z_0)
        } = 0$.
    By our assumption that $k$ is holomorphic in the first argument and
    antiholomorphic in the second argument, it follows that $k$ possesses derivatives in the
    first argument, and conjugate derivatives in the second argument, of all
    orders; let $k_{nm}$ denote the derivative
    of $k$ taken $n$ times in the first argument and the conjugate derivative $m$ times in the second
    argument. In particular, the existence of $k_{11}$ means that
    $f$ indeed possesses a mean square derivative $f_1$ as described above.

    Since $f_1$ is mean zero and proper, it follows that its real and imaginary
    parts ($x_1$ and $y_1$ respectively) are independent and identical real Gaussian processes with mean zero,
    and calculation from the relation $x_1 = \frac{1}{2}(f_1+f_1^*)$ yields the
    covariance
    \begin{equation*}
        \Ex{x_1(z)x_1(w)} = \Re{k_{11}(z,w)}.
    \end{equation*}
    From this it follows that
    $x_1$ has the canonical metric
    \begin{equation*}
        d^2_{x_1}(z,w) = \Ex{(x_1(z) - x_1(w))^2}
        = k_{11}(z,z) + k_{11}(w,w) - 2k_{11}(z,w).
    \end{equation*}
    By assumption we then have
    $
        d^2_{x_1}(z,w) \le |z - w|^{2+\alpha},
    $
    and from~\cite[prop. 1.16]{azais2009level} it follows that $x_1$ admits a version with
    continuous sample paths. Evidently the same is true for $y_1$, being
    identically distributed to $x_1$, which means that $f_1$ admits a continuous
    version.
    Define $g$ as the z-domain process whose realizations are formed as
    $
        g_\xi(z) = f(z_0) + \int_\gamma f_{1,\xi}(\zeta)d\zeta,
    $
    where $f_{1,\xi}$ is a realization of the continuous version of $f_1$,
    $z_0\in D$ and $\gamma$ is a continuous path starting at $z_0$ and
    ending at $z$. The realizations $g_\xi$ evidently possess derivatives; by
    the fundamental theorem of calculus we have
        $\frac{d}{dz} g_\xi = f_{1,\xi}$,
    which is continuous and therefore bounded on the compact domain $D$. We now
    show that $g$ is in fact a version of $f$ by showing that
    \begin{equation*}
        \begin{aligned}
            \Ex{|f(z) - g(z)|^2}
            &= \Ex{f(z)f(z)^* - f(z)g(z)^* - g(z)f(z)^* +
            g(z)g(z)^*}\\
            &= k(z,z) - \Ex{f(z)g(z)^*} - \Ex{g(z)f(z)^*} +
            \Ex{g(z)g(z)^*}
            = 0.
        \end{aligned}
    \end{equation*}
    For the second term we have
    \begin{equation*}
        \begin{aligned}
            \Ex{f(z)g(z)^*}
            &=
            \Ex{
                f(z)f(z_0)^*
                + f(z)\int_\gamma f_1(\zeta)^*d\zeta^*
            }\\
            &= \Ex{f(z)f(z_0)^*}
            + \int_\gamma \Ex{f(z)f_1(\zeta)}d\zeta^*
            = k(z,z_0) + \int_\gamma k_{01}(z,\zeta)d\zeta^*\\
            &= k(z,z),
        \end{aligned}
    \end{equation*}
    where the last line follows from the fact that $k$ is a complex
    antiderivative for $k_{01}$. The third term immediately follows since
    \begin{equation*}
    \Ex{g(z)f(z)^*} = \Ex{f(z)g(z)^*}^*
    = k(z,z)^* = k(z,z).
    \end{equation*}
    For the final term we have
    {\allowdisplaybreaks[4]
    \begin{equation*}
        \begin{aligned}
            \Ex{g(z)g(z)^*}
            &=
            \Ex{
                f(z_0)f(z_0)^*
            + f(z_0) \int_\gamma f_1(\zeta)^*d\zeta^*}\\
            &\quad+ \Ex{f(z_0)^* \int_\gamma f_1(\zeta)d\zeta
                + \left(\int_\gamma f_1(\zeta)d\zeta \right)
                \left(\int_\gamma f_1(\zeta)^*d\zeta^* \right)
            }\\
            &=
            \Ex{f(z_0)f(z_0)^*}
            + \int_\gamma \Ex{f(z_0)f_1(\zeta)^*} d\zeta^*
            + \int_\gamma \Ex{f(z_0)^*f_1(\zeta)} d\zeta\\
            &\quad+ \int_\gamma\int_\gamma \Ex{f(\zeta_1)f(\zeta_2)^*}d\zeta_1 d\zeta_2^*\\
            &=
            k(z_0,z_0)
            + \int_\gamma k_{01}(z_0,\zeta) d\zeta
            + \int_\gamma k_{01}(z_0,\zeta)^* d\zeta
            + \int_\gamma\int_\gamma k_{11}(\zeta_1,\zeta_2)d\zeta_1 d\zeta_2\\
            &=
            k(z_0,z_0)
            + (k(z_0,z) - k(z_0,z_0))
            + (k(z_0,z)^* - k(z_0,z_0)^*)\\
            &+ (k(z,z) - k(z,z_0) - k(z_0,z) + k(z_0,z_0))
            = k(z,z).
        \end{aligned}
\end{equation*}}
    Note that the exchange of the double integral is permitted by Fubini's theorem
    since realizations of $f_1$ are bounded. We have also used the Hermitian
    covariance properties $k(z,w)=k(w,z)^*$,
    $k(z,z)=k(z,z)^*$. Putting the terms together, we arrive at
    \begin{equation*}
        \begin{aligned}
            \Ex{|f(z) - g(z)|^2}
            &= k(z,z) - k(z,z) - k(z,z) + k(z,z) = 0,
        \end{aligned}
    \end{equation*}
    proving that $g$ is a version of $f$.
\end{proof}

\begin{proof}[Proof of Theorem~\ref{prop:stationary-process}]
    First, we show that $f$ having the form~\eqref{eq:process_expansion} with
    positive
    $\{a_n\}_{n=0}^\infty\in\ell^1$ implies
    that $f$ is a conjugate-symmetric and Hermitian stationary $H_\infty$ process
    with the given covariances.
    Suppose that $f(z) = \sum_{n=0}^\infty a_n w_n z^{-n}$.
    We can readily see that
    \begin{equation}
        \begin{aligned}
            k(z,w)
            &= \Ex{f(z)f(w)^*}
            = \Ex{
                \left(
                    \sum_{n=0}^\infty a_n w_n z^{-n}
                \right)
                \left(
                    \sum_{m=0}^\infty a_m w_m (w)^{-m}
                \right)^*
            }\\
            &= \Ex{
                \sum_{n=0,m=0}^\infty
                a_n a_m w_n w_m z^{-n} (w^*)^{-n}
            }
            = \sum_{n=0}^\infty a_n^2 (zw^*)^{-n},
        \end{aligned}
    \end{equation}
    where the cross terms vanish by the independence of the $w_n$.
    Conjugate symmetry follows from the direct calculation
    $
    f(z)^*
    =
    (\sum_{n=0}^\infty a_n w_n z^{-n})^*
    =
    \sum_{n=0}^\infty a_n w_n(z^*)^{-n}
    =
    f(z^*)
    $.
    Since $f$ is conjugate symmetric, its realizations represent systems with a real-valued
    impulse response when interpreted as transfer functions.
    Recall that a SISO LTI system is BIBO stable if and only if its impulse response
    is absolutely summable.
    To that end, consider the sequence
    $
        M_T=
        \sum_{n=0}^T |h_f(n)|
        =\sum_{n=0}^T a_n |w_n|
    $
    of partial sums:
    if $\lim_{T\to\infty}M_T$ converges to a random variable that is finite with probability one, then the impulse
    response is absolutely summable with probability one. Since the $w_n$ are independent and
    $0 < a_n |w_n| < \infty $ for all $n$, it follows that $M_T$ is a submartingale and that
    $\Ex{M_T}$ increases monotonically. Using the summability condition on the
    $a_n$ and the fact that $\Ex{|w_n|}=\sqrt{2/\pi}$ (as $|w_n|$ follows a
    half-normal distribution), we have
    \begin{equation}
        \Ex{\sum_{n=0}^\infty |h_f(n)|} = \sum_{n=0}^\infty a_n |w_n|
        =\sqrt{\frac{2}{\pi}}\sum_{n=0}^\infty a_n<\infty,
    \end{equation}
    which means $\sup_T \Ex{M_T}<\infty$ by monotonicity. Since $M_T$ is a
    submartingale and $\sup_T \Ex{M_T}$ is finite, it follows by the
    Martingale convergence theorem~\cite[Theorem 4.2.11]{durrett2019probability}
    that the limit of $M_T$ converges to a
    random variable that is finite with probability one.
    This shows that $h_f$ is absolutely summable with probability one, implying
    BIBO stability and that $f\in H_\infty$ with probability one.

    Next, we show that a Hermitian stationary, conjugate-symmetric $H_\infty$
    Gaussian process $f$ must have Hermitian and complementary covariances
    of the form~\eqref{eq:covariance_expansion}, and that this in turn implies
    that $f$ has the form~\eqref{eq:process_expansion}. Since $f\in H_\infty$,
    we can use the fact that $z^{-n}, n \ge 0$ is a basis for $H_\infty$ to
    expand $f$ as
    $
        f(z) = \sum_{n=0}^\infty h_n z^{-n},
    $
    where the coefficients $h_n = \langle f,z^{-n}\rangle_2$ are an infinite
    sequence of random variables.
    Since $f$ is Gaussian and conjugate symmetric,
    the $h_n$ are real Gaussian random
    variables that may be correlated. From this form, we can express the
    Hermitian and complementary covariance as
    \begin{equation}
            k(z,w) = \sum_{n=0}^\infty \sum_{m=0}^\infty
                \Ex{h_n h_m} z^{-n}(w^*)^{-m},
    \end{equation}
    \begin{equation}
            \tilde{k}(z,w) = \sum_{n=0}^\infty \sum_{m=0}^\infty
                \Ex{h_n h_m} z^{-n}(w)^{-m},
    \end{equation}
    which shows that $\tilde{k}(z,w)=k(z,w^*)$ for $z,w\in E^2$. Restricting the
    covariance functions to the unit circle, we have
    \begin{equation}
        \begin{aligned}
            \label{eq:covars1}
            k(e^{j\theta},e^{j\phi}) &= \sum_{n=0}^\infty \sum_{m=0}^\infty
                \Ex{h_n h_m} e^{-j(n\theta-m\phi)}\\
            \tilde{k}(e^{j\theta},e^{j\phi}) &= \sum_{n=0}^\infty \sum_{m=0}^\infty
                \Ex{h_n h_m} e^{-j(n\theta+m\phi)}.\\
        \end{aligned}
    \end{equation}
    By the assumption of Hermitian stationarity, we know that
    $k(e^{j(\theta-\phi)}, 1)$ is a positive definite function whose domain is
    the unit circle. We can therefore apply Bochner's
    theorem~\cite[section 1.4.3]{rudin1962fourier} to obtain a second expansion
    \begin{equation}
            \label{eq:covars2}
        k(e^{j(\theta-\phi)}, 1)=
        k(e^{j\theta}, e^{j\phi})=
        \sum_{n\in\Z} a_n^2 e^{-jn(\theta - \phi)},
    \end{equation}
    where $a_n$ are real and nonnegative. In order for the
    expansion of $k$ in~\eqref{eq:covars1} and the expansion
    in~\eqref{eq:covars2} to be equal, the positive-power terms
    in~\eqref{eq:covars2} must vanish, and the cross-terms in~\eqref{eq:covars1}
    must vanish.

    This means that $\Ex{h_n h_m}=0$ for $m\ne n$,
    from which it follows that the covariances have the form
    \begin{equation}
        \begin{aligned}
            k(z,w) &=
            \sum_{n=0}^\infty
            \Ex{h_n^2} (zw^*)^{-n}
            =
            \sum_{n=0}^\infty
            a_n^2 (zw^*)^{-n}
            \\
            \tilde{k}(z,w) &=
            \sum_{n=0}^\infty
            \Ex{h_n^2} (zw)^{-n}
            =
            \sum_{n=0}^\infty
            a_n^2 (zw)^{-n}
        \end{aligned}
    \end{equation}
    where we identify $a_n^2=\Ex{h_n^2}$,
    and that the $h_n$ are independent. Returning to the expanded form of the
    process and expressing $\Ex{h_n^2}=a_n^2$, we have
    $
        f(z) = \sum_{n=0}^\infty w_n a_n z^{-n}
    $
    where $w_n\simiid\Norm{0,1}$.

    Evidently, the impulse response $h_f$ has the same form as
    before, so the expected absolute sum of the impulse response is
    $\Ex{\sum_{n=0}^\infty |h_f(n)|}=\sqrt{\frac{2}{\pi}}\sum_{n=0}^\infty a_n$.
    Since $f\in H_\infty$ by assumption, it follows that
    $\sum_{n=0}^\infty |h_f(n)|$ almost surely converges, and therefore that
    $\Ex{\sum_{n=0}^\infty |h_f(n)|} <\infty$ by the Kolmogorov three-series
    theorem~(\cite[Theorem 2.5.8]{durrett2019probability}, condition (ii)),
    showing that $\{a_n\}_{n=0}^\infty\in\ell^1$.

\end{proof}

\begin{proof}[Proof (of Theorem~\ref{thm:gain_upcrossing_formula})]

To apply Belyaev's formula,
we must first establish that the conditions of Theorem~\ref{thm:belyaev_formula}
are satisfied by the vector GP $g$ composed of the real and imaginary parts of
$f$. That $g(\ejo)$ and $g'(\ejo)|g(\ejo)$ have distributions is
given by the fact that $g$ is a Gaussian process: $g(\ejo)$ is
Gaussian-distributed by definition; the derivative of a Gaussian process is
itself a Gaussian process; and a Gaussian random variable conditioned on another
Gaussian random variable is itself Gaussian. The given conditions on
differentiability ensure that $g$ is differentiable: a mean-zero process with
thrice-differentiable covariances is at least once-differentiable, and adding a
differentiable mean to the process preserves this property.
Finally, the
surface $x^2+y^2-\gamma^2=0$ satisfies the third condition, as an
$\epsilon$-neighborhood of $S_\phi$ is simply an annulus in the plane with inner
radius $\gamma-\epsilon$ and outer radius $\gamma+\epsilon$, which can be parameterized
using polar coordinates. 

With these conditions established, we know that the expected number of $\gamma$-level
gain upcrossings of $f$ is given by~\eqref{eq:belyaev_formula} with $g=(x,y)$
and $\Phi(x,y)=x^2+y^2-\gamma^2$; all that remains is to show how to
express~\eqref{eq:belyaev_formula} in terms computable from $k$, $\tilde{k}$,
and $m$, that is to derive~\eqref{eq:gain_upcrossing_formula}.
First, there is the matter of integration over $S_\Phi$: this can be handled by
integrating over its circular
parameterization $S_\Phi=\{(\gamma \cos\theta, \gamma \sin\theta), \theta\in[-\pi,\pi)\}$ 
in which case 
$
    z=z(\theta)=(\gamma\sin\theta,\gamma\cos\theta),\quad ds(z)=\gamma d\theta.
$
To obtain the distribution $p_{g(\Omega)}$, we require the first- and
second-order statistics of $g$. Since the real and imaginary parts of $f$ are
Gaussian, it follows that $g$ is a vector Gaussian process. Its mean and
variance are
\begin{equation}
    \Ex{g(\ejo)} 
    =
    \begin{bmatrix}
        \Ex{x(\ejo)} \\ \Ex{y(\ejo)}
    \end{bmatrix} = m(\Omega),
    \quad
        \Sigma(\Omega_1,\Omega_2)= 
        \begin{bmatrix}
            k_x(\Omega_1,\Omega_2) & k_c(\Omega_1,\Omega_2)\\
            k_c(\Omega_1,\Omega_2) & k_y(\Omega_1,\Omega_2)\\
        \end{bmatrix},
\end{equation}

which can be verified by substituting
$
        x(\ejo)=\frac{1}{2}(f(\ejo)+f(\ejo)^*),
$
$
        y(\ejo)=\frac{1}{2j}(f(\ejo)-f(\ejo)^*),
$
and working out the expectations.
From this, it follows that
\begin{equation}
\label{eq:upcrossing_proof_distribution}
p_{f(\Omega)}(z) = (2\pi)^{-1}\Sigma(\Omega,\Omega)e^{-\tfrac{1}{2} (z -
m(\Omega))^\top \Sigma(\Omega,\Omega)(z-m(\Omega))}.
\end{equation}
To compute the conditional expectation, we also require the joint distribution
of $g(\Omega)$ and its derivative $g^\prime(\Omega)$: this is another
multivariate normal; this is
\begin{equation}
    \begin{bmatrix}
        g^\prime(\Omega) \\ g(\Omega)
    \end{bmatrix}
    \sim
    \Norm{
        \begin{bmatrix}
            m_x^\prime(\Omega) \\
            m_y^\prime(\Omega) \\
            m_x(\Omega) \\
            m_y(\Omega)
        \end{bmatrix},
        \begin{bmatrix}
            \Sigma^\prime(\Omega,\Omega) & C(\Omega,\Omega) \\
            C^\top(\Omega,\Omega) & \Sigma(\Omega,\Omega)
        \end{bmatrix}
    }.
\end{equation}
We next compute the conditional distribution of $g$ given $g^\prime=z$; this is
again a multivariate normal: dropping $\Omega$ from the submatrices for brevity,
we have
\begin{equation}
    g(\Omega) | g^\prime(\Omega)=z
    \sim
    \Norm{
        m^\prime + C\Sigma^{-1}(z-m), 
        \Sigma^\prime - C\Sigma^{-1}C^\top
    }.
\end{equation}
The normal vector $n_\Phi(z)$ for $\Phi$---a circle with center zero and radius
$\gamma$---is simply $\gamma^{-1} z$. 
Thus the linear mapping 
$
(g(\Omega) | g^\prime(\Omega)=z)
\mapsto
n_\Phi(z)^\top(g(\Omega) | g^\prime(\Omega)=z)
$
gives us the distribution
\begin{equation}
    \begin{aligned}
        &n_\Phi(z)^\top(g(\Omega) | g^\prime(\Omega)=z)\\
        &\quad\sim
        \Norm{\gamma^{-1}z^\top m^\prime + \gamma^{-1}z^\top C\Sigma^{-1}(z-m), 
            \gamma^{-2} z^\top(\Sigma^\prime - C(\Sigma^{-1}C^\top) z}\\
        &\quad \eqdef \Norm{\mu_z(\Omega), \sigma^2_z(\Omega)}.
    \end{aligned}
\end{equation}

The remaining step is to compute the expectation of the positive part. Since
$
    n_\Phi(z)^\top(g(\Omega) | g^\prime(\Omega)=z)
$
is a Gaussian scalar, the
positive part is a rectified Gaussian, whose mean is
\begin{equation}
    \begin{aligned}
        \label{eq:upcrossing_proof_expectation}
        \Ex{(n_\Phi(z)^\top(g(\Omega))_+ | g^\prime(\Omega)=z)}
        &=
        \frac{\sigma_z(\Omega)}{\sqrt{2\pi}}e^{-\frac{1}{2}(\mu_z(\Omega)/\sigma_z(\Omega))^2}\\
        &+ \frac{1}{2}\mu_z(\Omega)\left(
                1 + \erf\left(\frac{\mu_z(\Omega)}{\sqrt{2}\sigma_z(\Omega)}\right)
            \right).
    \end{aligned}
\end{equation}
Finally, applying~\eqref{eq:upcrossing_proof_distribution}
and~\eqref{eq:upcrossing_proof_expectation} to the Belyaev
formula~\eqref{eq:belyaev_formula} yields the expression~\eqref{eq:gain_upcrossing_formula}. 
\end{proof}

\bibliographystyle{siamplain}
\bibliography{refs}

\end{document}